	\setlist{itemsep=0pt,topsep=.5ex}
  \DeclareFontFamily{U}{mathx}{\hyphenchar\font45}
  \DeclareFontShape{U}{mathx}{m}{n}{<-> mathx10}{}
  \DeclareSymbolFont{mathx}{U}{mathx}{m}{n}
  \DeclareMathAccent{\widebar}{0}{mathx}{"73}
\newcommand{\tableref}[1]{\hyperref[#1]{Table~\ref*{#1}}}
\newcommand{\figref}[1]{\hyperref[#1]{Figure~\ref*{#1}}}
\newcommand{\figandfigref}[2]%
	{\hyperref[#1]{Figures~\ref*{#1}} \hyperref[#1]{and~\ref*{#2}}}
\newcommand{\assref}[2]%
	{\hyperref[#1]{Assumption~\ref*{#1}}.\hyperref[#1]{\ref*{#2}}}
\newcommand{\assrefs}[3]%
	{\hyperref[#1]{Assumptions~\ref*{#1}}.\hyperref[#1]{\ref*{#2}}--\hyperref[#1]{\ref*{#1}}.\hyperref[#1]{\ref*{#3}}}
\newcommand{\tabref}[1]{\hyperref[#1]{Table~\ref*{#1}}}
\newcommand{\sectref}[1]{\hyperref[#1]{Section~\ref*{#1}}}
\newcommand{\sectsref}[1]{\hyperref[#1]{Sections~\ref*{#1}}}
\newcommand{\stepref}[1]{\hyperref[#1]{Step~\ref*{#1}}}
\newcommand{\defref}[1]{\hyperref[#1]{Definition~\ref*{#1}}}
\newcommand{\thmref}[1]{\hyperref[#1]{Theorem~\ref*{#1}}}
\newcommand{\lemmaref}[1]{\hyperref[#1]{Lemma~\ref*{#1}}}
\newcommand{\factref}[1]{\hyperref[#1]{Fact~\ref*{#1}}}
\newcommand{\factsubref}[2]%
	{\hyperref[#1]{Fact~\ref*{#1}}.\hyperref[#1]{\ref*{#2}}}
\newcommand{\exref}[1]{\hyperref[#1]{Example~\ref*{#1}}}
\newcommand{\exsref}[1]{\hyperref[#1]{Examples~\ref*{#1}}}
\newcommand{\alineref}[1]{\hyperref[#1]{line~\ref*{#1}}}
\newcommand{\alinesref}[1]{\hyperref[#1]{lines~\ref*{#1}}}
\newcommand{\remref}[1]{\hyperref[#1]{Remark~\ref*{#1}}}
\newcommand{\condref}[1]{\hyperref[#1]{Condition~\ref*{#1}}}
\newcommand{\Eqref}[1]{\hyperref[#1]{Equation~\ref*{#1}}}
\newcommand{\constrref}[1]{\hyperref[#1]{Construction~\ref*{#1}}}
\newcommand{\obsref}[1]{\hyperref[#1]{Observation~\ref*{#1}}}
\newcommand{\corref}[1]{\hyperref[#1]{Corollary~\ref*{#1}}}
\newtheoremstyle{plainbreak}%
  {}{}%
  {\itshape}{}%
  {\bfseries}{}%  % Note that final punctuation is omitted.
  {\newline}{}
\newtheoremstyle{definitionbreak}%
  {}{}%
  {}{}%
  {\bfseries}{}%  % Note that final punctuation is omitted.
  {\newline}{}
\newtheoremstyle{remarkbreak}%
  {}{}%
  {}{}%
  {\itshape}{}%  % Note that final punctuation is omitted.
  {\newline}{}
\newtheoremstyle{plain}%
  {5pt minus 2pt plus 2pt}{5pt minus 2pt plus 2pt}%
  {}{}%
  {\bfseries}{}%  % Note that final punctuation is omitted.
  {.5em}{}
\newtheoremstyle{definition}%
  {5pt minus 2pt plus 2pt}{5pt minus 2pt plus 2pt}%
  {}{}%
  {\bfseries}{}%  % Note that final punctuation is omitted.
  {.5em}{}
\newtheoremstyle{remark}%
  {5pt minus 2pt plus 2pt}{5pt minus 2pt plus 2pt}%
  {}{}%
  {\itshape}{}%  % Note that final punctuation is omitted.
  {.5em}{}
\theoremstyle{plain}
  \newtheorem{defn}{Definition}[section]
  \newtheorem{example}{Example}[section] % eigener Zähler!
  \newtheorem{theorem}[defn]{Theorem}
  \newtheorem{lemma}[defn]{Lemma}
  \newtheorem{fact}[defn]{Fact}
\newcommand{\opname}[1]{\operatorname{\text{$#1$}}}
\def\phi{\varphi}
\def\rho{\varrho}
\def\theta{\vartheta}
\newcommand{\card}[1]{| #1 |}              % cardinality of a set
\def\emptyset{\varnothing}
\newcommand{\Nat}{\mathbb{N}}
\newcommand{\tup}[1]{\langle #1 \rangle}   % tuple <x,y>
\newcommand{\true}{\mathit{true}}
\newcommand{\false}{\mathit{false}}
\newcommand{\rank}{\mathit{rank}}   % rank of symbols (and aut. states)
\newcommand{\Voc}{\Sigma}       % vocabulary 
\newcommand{\emptyseq}{\varepsilon}        % empty sequence
\newcommand{\att}{\mathit{att}}   % attachments of nodes
\newcommand{\ed}[1]{\bar{#1}}    % edge qualification
\newcommand{\gcat}{\mathop{\odot}}         % graph concatenation
\newcommand{\off}{\ominus}           % cutting off
\newcommand{\cg}{\mathit{CG}}           % configuration dag
\newcommand{\front}{\mathit{front}}               % front node sequence
\newcommand{\Idg}[1]{\mathit{Id}_{#1}}   % identity graph 
\newcommand{\lab}{\mathit{lab}}               % label function 
\newcommand{\nd}[1]{\dot{#1}}   % nodes qualification 
\newcommand{\rear}{\mathit{rear}}               % rear node sequence
\newcommand{\src}{\mathit{src}}   % source node
\newcommand{\tgt}{\mathit{tgt}}   % target node
\def\Exp{\mathbb{E}}
\def\ex{\mathit{ex}}
\def\fo{\mathit{fo}}
\def\satisfies{\vDash}                   % satisfaction |=
\newcommand{\aut}[1]{\mathfrak{#1}}
\newcommand{\trans}{\vdash}
\newcommand{\ass}{\textit{Ass}}
\newcommand{\evo}{\textit{Evo}}
\def\pg(#1){\tup{#1}} % the permutation graph of a permutation f
\newcommand{\EE}{\mathbb{E}}
\def\GG{\mathbb{G}}
\newcommand{\M}{\mathcal{M}}
\newcommand{\LL}{\mathcal{L}}
\newcommand{\invis}{\mbox{\textvisiblespace}}
\newcommand{\Ident}{X}
\newcommand{\id}{x}
\newcommand{\sign}{\mathit{sign}}
\newcommand{\RecFormula}{\mathcal{F}}
\DeclareMathOperator{\evalstep}{\triangleright}
\newcommand{\fcg}{\mathit{FCG}}
\newcommand{\spos}{\mathrm{pos}}
\newcommand{\sneg}{\mathrm{neg}}
\newcommand{\egraph}{E}
\newcommand{\Quant}{\mathsf{Q}}
\newcommand\raiseb[1]{\!\!\!\raisebox{0.6ex}{#1}\!\!}
\def\CFG#1#2#3{#1,#2,#3}
\def\CFGr#1#2#3{#1,\raiseb{#2},#3}
\newcommand\Red{red!75!black}
\newcommand\Green{green!40!black}
\newcommand\CloudColor{gray!20!white}
\tikzset{% default dimensions and arrow tips
  x=8mm,y=8mm,>=latex,            
  % graph background style
  background rectangle/.style
  ={fill=\BackgroundColor,rounded corners=4pt
     % background frame xsep=0pt,background frame ysep=4pt,show
     % background left
     % inner frame xsep=0pt,inner frame ysep=4pt
      },
  % graph background style
% node and edge opacity (text opacity=fill opacity)
  glass/.style ={opacity=0,text opacity=0},     % opacity styles
  satin/.style ={opacity=0.3,text opacity=0.3},
  % opaque/.style={opacity=1,text opacity=1}, -- is predefined
% edge styles 
% -- nonttentacles, HRGs
%   arm/.style={-,very thick,draw=gray!50},
% % -- program graphs
%   rcv/.style={-,double,draw=black},
  link/.style={-,decorate,
         decoration={snake,amplitude=0.5pt,segment length=2pt}},
  curly/.style={->,decorate,
         decoration={snake,amplitude=0.5pt,segment length=2pt,post length=2pt}},
%     call/.style={link},
%     override/.style={link,->},
%     read/.style={link,<-},
%     write/.style={link,->},
%     xref/.style={link},
%     match/.style={line width=1.5pt,\Gray},
%     fusion/.style={glass},
%     % fusion/.style={line width=3.5pt,\Red},
% % node styles
% % -   unlabeled nodes
% % --   circular unlabeled nodes
  e/.style={inner sep=0.0pt,outer sep=0.0pt,minimum size=0pt},
%   i/.style={circle,fill=black,inner sep=0.0pt,minimum size=2pt},
  o/.style={circle,draw,solid,fill=white,font=\tiny,inner sep=1pt,minimum size=2.5mm},
   hyper/.style={rectangle,fill=white,draw,inner sep=0pt,minimum
     size=3.5mm,font=\scriptsize},
  aut/.style={rectangle,inner sep=2pt,minimum height=4mm
             ,draw,fill=\BackgroundColor},
  universal/.style={fill=black,text=white},
  config/.style={rectangle,rounded corners=4pt,draw,very thin,inner sep=2pt},
  acc/.style={config,thick,draw=\Green},
  rej/.style={config,thick,densely dashed,draw=\Red},
  % reads/.style={nds,fill=black},
%     outs/.style={nds,minimum width=3mm,minimum height=1mm},
%     freshs/.style={nds,minimum width=6mm,fill=gray!50},
%     rhs/.style={rounded rectangle
%                ,inner sep=0.25mm,minimum width=4mm,draw},
%   tentaclenum/.style={circle,draw,thin,fill=white,
%                       inner sep=0.5pt,outer sep=0pt,font=\tiny},
%   eds/.style={cloud,aspect=1.5,draw,fill=white,font=\scriptsize
%              ,inner sep=1pt},
% % -- attributes of abstract configuration styles
%     virtual/.style={draw=\Red},
% % -- other configuration nodes
%   tos/.style={rectangle,minimum size=1.3mm,fill,draw,inner sep=0pt
%              ,label=center:$\thndlab$},
%   % tosglass/.style={rectangle,minimum size=1.3mm,fill,draw
%   %             ,inner sep=0pt,glass,label={glass,center}:$\thndlab$},
% % labeled nodes 
% % -- edge labels
% %   elab/.style={text=black,font=\scriptsize},
% % -- node labels
% %  nlab/.style={label distance=0pt,text=black,font=\tiny},
% %  -- node identifier style
  Cloud/.style={cloud,cloud puffs=17,minimum width=15mm,minimum height=5mm,fill=\CloudColor,draw},
 node id/.style={label distance=2pt,inner sep=0pt,outer sep=0pt,
                  anchor=mid,text=black,font=\tiny\tt},
% %   sort/.style={circle,minimum size=4mm,fill=white,
% %                inner sep=0.5pt,font=\sffamily\scriptsize},
%   schema/.style={rounded rectangle,minimum size=3.5mm,
%                  draw,fill=white,
%                  inner sep=1.75pt,font=\sffamily\scriptsize,text height=1.5ex,text depth=0.25ex},
  blob/.style={rectangle,rounded corners=6mm,fill opacity=0.4,draw,very thin}, 
  blub/.style={rectangle,rounded corners=6mm,inner sep=2pt}, 
  meta/.style={double equal sign distance,-Implies},
 % meta/.style={>=Computer Modern Rightarrow,->,double equal sign distance,double},
%   pluspath/.style={->,double distance=1pt,double},
%   Path/.style={pluspath},
%   state/.style={circle,draw},
%   EA/.style={state,minimum size=3.5mm,inner sep=1pt},
%   cfa/.style={rectangle,draw,minimum width=5.5mm,minimum height=3.7mm,  ,inner sep=1pt,font=\small}
  }
\newenvironment{automaton}[1][x=10mm,y=10mm]%
{\begin{tikzpicture}[background
    rectangle/.style={fill=white,rounded
      corners=4pt,draw,thin},
   inner frame sep=4pt,show background rectangle,
    baseline=(current bounding box.center),
    node distance=8mm,
    initial text=,
    every state/.style={fill=white,minimum size=1mm,inner sep=1pt},
    % initial/.style={},
    % every accepting/.style={},
    every label/.style={node id},
    #1]}%
  {\end{tikzpicture}}
\newcommand\BackgroundColor{gray!20!white}
\newenvironment{graph}[1][x=10mm,y=10mm]%
{\begin{tikzpicture}[background
    rectangle/.style={fill=\BackgroundColor,rounded
      corners=4pt,draw,solid,ultra thin},
    inner frame sep=0pt,show background rectangle,
    baseline=(current bounding box.center),
    every label/.style={node id},
    #1]}%
  {\end{tikzpicture}}
\def\frontptr(#1){\path (f-#1) edge[-,double distance=1pt,double,solid] (#1);}
\def\rearptr(#1){\path (#1) edge[-,out=0,in=180,double distance=1pt,double,solid] (r-#1.east);}
\def\invnode(#1)(#2){
  \node[minimum size=8pt,glass] at (#2) {};
}
\def\inode(#1)(#2){
  \node[o,minimum size=1.5mm] (#1) at (#2) {};
  \node[minimum size=8pt] at (#2) {};
  \node[e]        at ($(#2)-(0.3,0)$) {};
  \node[e]        at ($(#2)+(0.3,0)$) {};
}
\def\Inode(#1)(#2){
  \node[o] (#1) at (#2) {$#1$};
  \node[e]        at ($(#2)-(0.3,0)$) {};
  \node[e]        at ($(#2)+(0.3,0)$) {};
}
\def\fnode(#1)(#2){
  \inode (#1)(#2);
  \node[e] (f-#1) at ($(#2)-(0.4,0)$) {};
  \frontptr(#1);
}
\def\Fnode(#1)(#2){
  \Inode (#1)(#2);
  \node[e] (f-#1) at ($(#2)-(0.4,0)$) {};
  \frontptr(#1);
}
\def\rnode(#1)(#2){
  \inode (#1)(#2);
  \node[e] (r-#1) at ($(#2)+(0.4,0)$) {};
  \rearptr(#1);
}
\def\Rnode(#1)(#2){
  \Inode (#1)(#2);
  \node[e] (r-#1) at ($(#2)+(0.4,0)$) {};
  \rearptr(#1);
}
\def\frnode(#1)(#2){
  \fnode(#1)(#2);
  \node[e] (r-#1) at ($(#2)+(0.4,0)$){};
  \rearptr(#1);
}
\def\FRnode(#1)(#2){
  \Fnode(#1)(#2);
  \node[e] (r-#1) at ($(#2)+(0.4,0)$){};
  \rearptr(#1);
}
\def\DownEdge[#1]{\tikz \draw (0pt,8pt) edge[#1] (0pt,0pt);}
\def\blob(#1) at (#2)#3{%
  \node[blub](#1)at(#2) {\begin{tabular}{@{}c@{}}  #3 \end{tabular}
};
  }
\def\enlargebb{\node [glass,fit= (current bounding box),inner sep=2pt] {};}
\def\Highlight#1#2#3(#4){
  \begin{pgfonlayer}{background} 
      \node (#4) [subgraph,fit= #1,inner sep=#3,fill=#2] {}; 
   \end{pgfonlayer}
}
\def\Just#1%
\def\Graph{\@ifnextchar[{\@Graph}{\@GrapH}}
\def\@Graph[#1]#2{%
  \BOX{%
    \begin{tikzpicture}[x=8mm,y=8mm,label distance=-2pt,>=latex,#1]
     #2%
    \end{tikzpicture}%
  } 
}
\def\@GrapH#1{%
  \BOX{%
    \begin{tikzpicture}[x=8mm,y=8mm,label distance=-2pt,>=latex]
     #1%
    \end{tikzpicture}%
  } 
}
\def\proGraph{\@ifnextchar[{\@proGraph}{\@pro@Graph}}
\def\@proGraph[#1]#2{\BOX{%
  \begin{tikzpicture}[x=8mm,y=-7mm,>=latex,every label/.style={elab},#1]
     #2
    \enlargebb
   \end{tikzpicture}}}
\def\@pro@Graph#1{\BOX{%
  \begin{tikzpicture}[x=8mm,y=-7mm,>=latex,every label/.style={elab}]
    #1
    \enlargebb
  \end{tikzpicture}}}
\def\uvar(#1)#2{\@ifnextchar[{\@uvar(#1)#2}{%
  \node (#1-node) at ($(#1)+(0,1)$) {#2};
  \path[arm] (#1) edge (#1-node);
}}
\def\@uvar(#1)#2[#3]{%
  \node (#1-node) at ($(#1)+(#3,1)$) {#2};
  \path[arm] (#1) edge (#1-node);
}
\def\ustar#1#2{\@ifnextchar[{\@ustar{#1}{#2}}{\Graph{%
      \node (r) [term] at (1,1) {#1};
      \node (h) [nont] at (1,2) {#2};
      \path (r) edge[arm] (h);
}}}
\def\@ustar#1#2[#3]{\Graph[#3]{%
      \node (r) [term] at (1,1) {#1};
      \node (h) [nont] at (1,2) {#2};
      \path (r) edge[arm] (h);
}}
\def\custar#1#2#3{\@ifnextchar[{\@custar(#1)#2}{\proGraph{%
      \node (r) [term] at (1,1) {#1};
      \node (h) [nont] at (1,2) {#2};
      \path (r) edge[arm] (h);
      \node (c) [term] at (1,3) {#3};
}}}
\def\@custar#1#2#3[#4]{\proGraph[#4]{%
      \node (r) [term] at (1,1) {#1};
      \node (h) [nont] at (1,2) {#2};
      \path (r) edge[arm] (h);
      \node (c) [term] at (1,3) {#3};
}}
\newcommand{\BOX}[1]{\begin{array}{@{}c@{}}#1\end{array}}
\title{Systems of Graph Formulas and their\\Equivalence to Alternating
  Graph Automata}
\author{Frank Drewes\thanks{Partially supported by the Swedish Research Council under grant no. 2024-05318, and by the Wallenberg AI, Autonomous Systems and Software Program through the NEST project STING — Synthesis and Analysis with Transducers and Invertible Neural Generators.}
  \institute{% Institutionen för datenwetenskap \\
    Ume\aa\ universitet \\
    Ume\aa, Sweden}
  \email{drewes@cs.umu.se}
  \and Berthold Hoffmann
  \institute{% Fachbereich 3 -- Informatik \\
    Universität Bremen\\
    Bremen, Germany}
  \email{hof@uni-bremen.de}
  \and Mark Minas
  \institute{% Fakultät für Informatik \\
    Universität der Bundeswehr München \\
    Neubiberg, Germany}
  \email{mark.minas@unibw.de}
}
\begin{document}
\maketitle
\begin{abstract}
   Graph-based modeling plays a fundamental role in many areas of computer
   science. In this paper, we introduce systems of graph formulas with
   variables for specifying graph properties; this notion generalizes
   the graph formulas introduced in earlier work by incorporating recursion. We
   show that these formula systems have the same expressive power as alternating
   graph automata, a computational model that extends traditional finite-state
   automata to graphs, and allows both existential and universal states. In
   particular, we provide a bidirectional translation between formula systems
   and alternating graph automata, proving their equivalence in specifying graph
   languages. This result implies that alternating graph automata can be
   naturally represented using logic-based formulations, thus bridging the gap
   between automata-theoretic and logic-based approaches to graph language
   specification. 
\end{abstract}

\section{Introduction\label{s:intro}}
Graph-like structures are ubiquitous in computer science and beyond.
In many cases, they need to fulfill application-dependent structural
properties that have to be specified and checked.  In the literature,
such properties have been called
global~\cite{Gaifman:82} if they concern graph regions of unbounded
size, like connectedness or the existence of cycles.
In this paper, we relate two mechanisms for specifying global graph properties:
systems of graph formulas and  alternating graph automata.

  Graph expressions based on the notion of graph concatenation by Engelfriet
  and Vereijken~\cite{Engelfriet-Vereijken:97} have been introduced
  in~\cite{Drewes-Hoffmann-Minas:24,derosa-minas:24}. They resemble ordinary regular expressions
  and define a class of graph languages which can be shown to be equal to the
  class of graph languages accepted by finite graph automata. In~\cite{dhm-icgt25b}, we
  have used graph expressions to introduce a more powerful notion of graph formulas.
  Rather than being a special case of formulas in predicate logic such as monadic second-order
  logic~\cite{Courcelle-Engelfriet:12}, the
  structure of our formulas corresponds to that of the nested graph conditions
  by Habel and Pennemann~\cite{Habel-Pennemann:05,Pennemann:09}.
  However, the basic
  building blocks are graph expressions in the former case rather than graph morphisms as in the latter case. Graph
  properties specified by graph formulas are in the polynomial hierarchy PH.
  
  On the one hand, graph formulas can be translated to the alternating graph automata
  introduced in~\cite{dhm-icgt25a}. This was shown in~\cite{dhm-icgt25b}, thus proving
  that the latter are at least as powerful as the former. On the other hand,
  from~\cite{dhm-icgt25a} alternating graph automata are known to be capable of
  describing PSPACE-complete graph properties. Together with the aforementioned fact
  that graph formulas can only capture graph properties in the polynomial hierarchy,
  this shows that alternating graph automata are strictly more powerful than graph
  formulas, provided that $\mathrm{PH} \neq \mathrm{PSPACE}$.
  
  Intuitively, the reason for this difference in power is that each graph formula
  has a fixed number of quantifier alternations as each formula syntactically
  resembles a tree. In contrast, alternating graph automata can be cyclic, thus
  implementing unbounded quantifier depth.
  
  In this paper we extend graph formulas to systems of graph formulas with
  variables, similar to Flick who extended nested graph conditions to
  recursively nested graph conditions~\cite{Flick:16}. Also similar to Mezei and
  Wright's notion of systems of language equations~\cite{Mezei-Wright:67},
  such a system consists of a finite set of variables, each of which is
  mapped to a graph formula. Intuitively, each variable represents the set of
  graphs that satisfy the associated formula. Since every formula may itself
  contain variables, the system may be cyclic. This leads us to define its
  semantics as a least fixed point.

  After preliminary definitions in \sectref{s:prel} and a recap of alternating
  graph automata in \sectref{s:afga}, we define systems of graph formulas and
  their semantics in \sectref{s:graph formulas}, where we also show that this
  semantics coincides with that of graph formulas in the acyclic case
  (\lemmaref{l:inductive-semantics}). Furthermore, we establish a useful normal
  form of systems of graph formulas. Using this normal form, we finally prove
  the main result of this paper in \sectref{s:power}: systems of graph formulas
  are precisely as powerful as alternating graph automata (\thmref{thm:main}).
  As a running example, we specify the language of all graphs that have a
  Hamiltonian cycle, which cannot be specified by recursively nested
  conditions~\cite[p.~143]{Flick:16}.

\section{Preliminaries}\label{s:prel}

We let $\Nat$ denote the set of non-negative integers and $[n]$ the
set $\{1,\dots,n\}$ for all $n\in\Nat$. $A^*$~denotes the set of all
finite sequences over a set~$A$; the empty sequence is denoted by
$\emptyseq$. 
%For a sequence $s=a_1\cdots a_n$ with $n\in\Nat$ and
%$a_1,\dots,a_n \in A$, and a function $f\colon A \to B$, we let
%$f(s) = f(a_1) \cdots f(a_n)$. Furthermore, 
For a sequence $s\in A^*$, $[s]$ denotes the set
of all members of $A$ occurring in~$s$. 
%As usual,
%we denote the transitive reflexive closure of a binary relation
%$R\subseteq A\times A$ by~$R^*$. 
%For a (total or partial) function
%$f:A\to B$, we identify $f$ with the set of all pairs $(x,f(x))$ where
%$f(x)$ is defined. 
%
%As usual, given sets $A,B$ the complement of a binary relation
%${\leadsto}\subseteq A\times B$ is denoted by $\not\leadsto$, 
For a binary relation
${\leadsto}\subseteq A\times B$, we write $\not\leadsto$ for its complement, 
i.e., for $a\in A$ and $b\in B$, $a\not\leadsto b$ holds if and only 
if $a\leadsto b$ does not.

We consider edge-labeled hypergraphs (which we simply call graphs),
i.e., edges are attached to
sequences of nodes.  To be able to concatenate graphs in the way originally
proposed by Engelfriet and Vereijken~\cite{Engelfriet-Vereijken:97}, we supply
each graph with two
sequences of distinguished nodes, its front and rear
interfaces.  As in
\cite{Drewes-Habel-Kreowski:97}, we require these sequences to be free
of repetitions.%
%\footnote{In contrast to \cite{Habel:92}, however, we
%  do not divide the attached nodes of edges into sources and targets.}

A ranked set $(\mathcal{S},\rank)$ consists of
a finite set $\mathcal{S}$ of elements and a function
$\rank \colon \mathcal{S} \to \Nat$, which assigns
a rank to each element $a\in\mathcal{S}$. The pair $(\mathcal{S},\rank)$
is usually identified with $\mathcal{S}$, keeping $\rank$ implicit.
For $k\in\Nat$, we let $\mathcal{S}^{(k)}=\{a\in\mathcal{S}\mid\rank(a)=k\}$.

\begin{defn}[Graph]\label{s:graph}%
  Let $\Voc$ be a ranked set of symbols. A \emph{graph} over $\Voc$ is a tuple
  $G = (\nd G, \ed G, \att_G, \allowbreak \lab_G, \allowbreak
    \front_G, \allowbreak \rear_G)$, where $\nd G$ and $\ed G$ are
  disjoint finite sets of \emph{nodes} and \emph{edges}, respectively,
  $\att_G \colon \ed G \to \nd G ^*$ attaches sequences
  of nodes to edges,
  $\lab_G \colon \ed G \to \Voc$ labels edges with symbols so
  that $\card{\att_G(e)} = \rank(\lab_G(e))$ for every edge
  $e \in \ed G$, and the repetition-free node sequences
  $\front_G, \rear_G \in \nd G^*$ are the
  \emph{front} and the \emph{rear} interface, respectively.

  The \emph{type} of a graph $G$ is $(\card{\front_G},\card{\rear_G})$.
  The set of all graphs of type $(m,n)$ is denoted by $\GG_\Voc^{(m,n)}$.
  Furthermore, $\GG_\Voc^m = \bigcup_{n \in\Nat} \GG_\Voc^{(m,n)}$
  and $\GG_\Voc = \bigcup_{m,n \in\Nat} \GG_\Voc^{(m,n)}$. A subset $\LL
  \subseteq \GG_\Voc$ is called a \emph{graph language}, and if
  $\LL\subseteq\GG_\Voc^{(m,n)}$, then it is a graph language of type
  $(m,n)$.
\end{defn}

A \emph{permutation graph} is a graph $G$ with $\ed G=\emptyset$ and
$\nd G=[\front_G]=[\rear_G]$. If, furthermore, $\front_G=\rear_G$ and
$|\nd G|=n$, then $G$ is an \emph{identity graph} (on $n$ nodes)
and is denoted by $\Idg n$.

% a tree of type (2,0) without some nodes and edges
% \TreeWithout(fn-1)(fn-2){n-1 ...n-k}{s-1/t-2 ... s_k/t_k}
\def\TreeWithout(#1)(#2)#3#4{
  \begin{graph}[x=1.5mm,y=-4mm]
    \node[minimum size=8pt] (g-1) at (1,0) {}; % glass nodes of the tree
    \node[minimum size=8pt] (g-2) at (0,1) {};
    \node[minimum size=8pt] (g-3) at (2,1) {};
    \node[minimum size=8pt] (g-4) at (0,2) {};
    \node[e] (f-1) at (-1.5,0.5) {};  % front nodes of the tree
    \node[e] (f-2) at (-1.5,1.5) {};
    \path (f-1) edge[overlay,-,double distance=1pt,double,out=0,in=180] (#1);
    \path (f-2) edge[overlay,-,double distance=1pt,double,out=0,in=180] (#2);
    \foreach \nd in {#3} {\node[o,minimum size=1.5mm] (\nd) at ($(g-\nd)$) {};}
    \foreach \src / \tgt in {#4} {\path (\src) edge[->] (\tgt);}
  \end{graph}
}
% a tree of type (0,0) without some nodes and edges
% \TreeAlone{n-1 ...n-k}{s-1/t-2 ... s_k/t_k}
\def\TreeAlone#1#2{
  \begin{graph}[x=1.5mm,y=-4mm]
    \node[minimum size=8pt] (g-1) at (1,0) {}; % glass nodes of the tree
    \node[minimum size=8pt] (g-2) at (0,1) {};
    \node[minimum size=8pt] (g-3) at (2,1) {};
    \node[minimum size=8pt] (g-4) at (0,2) {};
    % \node[e] (f-1) at (-1.5,0.5) {};  % front nodes of the tree
    % \node[e] (f-2) at (-1.5,1.5) {};
    % \path (f-1) edge[overlay,-,double distance=1pt,double,out=0,in=180] (#1);
    % \path (f-2) edge[overlay,-,double distance=1pt,double,out=0,in=180] (#2);
    \foreach \nd in {#1} {\node[o,minimum size=1.5mm] (\nd) at ($(g-\nd)$) {};}
    \foreach \src / \tgt in {#2} {\path (\src) edge[->] (\tgt);}
  \end{graph}
}
% A graph of type (2,2) with a forward  edge (to the right)
\def\GoutEN{
  \begin{graph}[x=5mm,y=-5mm]
    \fnode(Ef-x)(1,1)
    \rnode(Ef-z)(2,1)
    \fnode(Ef-y)(1,2)
    \node[e] (r-Ef-y) at (2.4,2) {};
    \rearptr(Ef-y)
    \path
    (Ef-x) edge[->] (Ef-z)
    ;
  \end{graph}
}
% A boxed graph of type (2,2) with a forward  edge (to the right)
\newsavebox\BoutEN \sbox\BoutEN{\GoutEN}

\begin{defn}[Graph Concatenation \cite{Engelfriet-Vereijken:97}]\label{d:graphops}
  Let $G \in \GG_\Voc^{(i,k)}$ and $H \in \GG_\Voc^{(k,j)}$.  We
  assume for simplicity that $\rear_G=\front_H$,
  $\nd G\cap\nd H=[\rear_G]=[\front_H]$, and
  $\ed G\cap\ed H=\emptyset$. (Otherwise, appropriate isomorphic copies
  of $G$ and $H$ are used.)  The (\emph{typed}) \emph{concatenation}
  $G \cdot H$ of $G$ and $H$ is the graph $C$ such that
  $\nd C=\nd G\cup\nd H$, $\ed C=\ed G\cup\ed H$,
  $\att_C=\att_G\cup\att_H$, $\lab_C=\lab_G\cup\lab_H$,
  $\front_C=\front_G$, and $\rear_C=\rear_H$. Thus,
  $C \in \GG_\Voc^{(i,j)}$.
\end{defn}

Note that $G\cdot H$ is defined on concrete graphs if
the assumptions in \defref{d:graphops} are satisfied, but is otherwise only
defined up to isomorphism. To avoid
unnecessary technicalities, we usually assume without mentioning that the former
is indeed the case.

We will need to find frontal subgraphs of a graph and to cut them off in order
to define the semantics of alternating graph automata and systems of
graph formulas in the next sections.

\begin{defn}[Frontal Subgraphs and Cutting Them Off]\label{d:frontal-subgraph}
  A graph $G\in \GG_\Voc^{(i,k)}$ is a \emph{frontal subgraph} of $H \in \GG_\Voc^{(i,j)}$
  if
  \begin{itemize}
    \item $\nd G\subseteq\nd H$,
    \item $\ed G\subseteq\ed H$ with $\lab_G(e)=\lab_H(e)$ and
          $\att_G(e)=\att_H(e)$ for all $e\in\ed G$,
    \item $\front_G=\front_H$, and
    \item for all $v\in\nd G$, $v\in[\rear_H]$ implies $v\in[\rear_G]$.
  \end{itemize}
  Then, \emph{cutting $G$ off $H$} yields the graph $H \off G$, which is obtained from $H$ by 
  removing all nodes in $\nd G\setminus[\rear_G]$, as well as 
  all edges that are either in $\ed G$ or attached to a node in $\nd G\setminus[\rear_G]$. 
  In the resulting graph, $\rear_G$ becomes the front interface and $\rear_H$ becomes the rear interface.
\end{defn}

\def\TreeGx{
	\begin{graph}[x=1.8mm,y=-5.5mm]
		\node[e] (f-1) at (-1.5,0) {};
		\node[e] (f-2) at (-1.5,2) {};
		\node[e] (r-1) at (3.5,2) {};
		\node[o,minimum size=3mm] (1) at (1,0) {a};
		\node[o,minimum size=3mm] (2) at (1,1) {b};
		\node[o,minimum size=3mm] (3) at (0,2) {c};
		\node[o,minimum size=3mm] (4) at (2,2) {d};
		\node[e,minimum size=4mm] at (1,0) {};
		\node[e,minimum size=4mm] at (2,2) {};
		\path
		(1) edge[overlay,double distance=1pt,double] (f-1)
		(3) edge[overlay,double distance=1pt,double] (f-2)
		(4) edge[overlay,double distance=1pt,double] (r-1);
		\path[->]
		(1) edge (2)
		(2) edge (3) edge (4);
	\end{graph}
}
\def\TreeGy{
	\begin{graph}[x=1.8mm,y=-5.5mm]
		\node[e] (f-1) at (-1.5,0) {};
		\node[e] (f-2) at (-1.5,2) {};
		\node[e] (r-1) at (3.5,1) {};
		\node[o,minimum size=3mm] (1) at (1,0) {a};
		\node[o,minimum size=3mm] (2) at (0,1) {b};
		\node[o,minimum size=3mm] (3) at (0,2) {c};
		\node[o,minimum size=3mm] (4) at (2,1) {d};
		\node[e,minimum size=4mm] at (1,0) {};
		\node[e,minimum size=4mm] at (2,2) {};
		\path
		(1) edge[overlay,double distance=1pt,double] (f-1)
		(3) edge[overlay,double distance=1pt,double] (f-2)
		(4) edge[overlay,double distance=1pt,double] (r-1);
		\path[->]
		(1) edge (2) edge (4)
		(2) edge (3);
	\end{graph}
}
\def\FrontalSubGraph(#1){
	\begin{graph}[x=1.8mm,y=-5.5mm]
		\node[e] (f-1) at (-1.5,0) {};
		\node[e] (f-2) at (-1.5,2) {};
		\node[e] (r-1) at (1.5,1) {};
		\node[e] (r-2) at (1.5,2) {};
		\node[o,minimum size=3mm] (1) at (0,0) {a};
		\node[o,minimum size=3mm] (2) at (0,1) {#1};
		\node[o,minimum size=3mm] (3) at (0,2) {c};
		\node[e,minimum size=4mm] at (0,0) {};
		\node[e,minimum size=4mm] at (0,2) {};
		\path
		(1) edge[overlay,double distance=1pt,double] (f-1)
		(2) edge[overlay,double distance=1pt,double] (r-1)
		(3) edge[overlay,double distance=1pt,double] (f-2)
		edge[overlay,double distance=1pt,double] (r-2);
		\path[->]
		(1) edge (2);
	\end{graph}
}
\def\CutResultX{
	\begin{graph}[x=1.8mm,y=-5.5mm]
		\node[e] (f-1) at (-1.5,1) {};
		\node[e] (f-2) at (-1.5,2) {};
		\node[e] (r-1) at (3.5,2) {};
		\node[o,minimum size=3mm] (2) at (1,1) {b};
		\node[o,minimum size=3mm] (3) at (0,2) {c};
		\node[o,minimum size=3mm] (4) at (2,2) {d};
		\node[e,minimum size=4mm] at (1,0) {};
		\node[e,minimum size=4mm] at (2,2) {};
		\path
		(2) edge[overlay,double distance=1pt,double] (f-1)
		(3) edge[overlay,double distance=1pt,double] (f-2)
		(4) edge[overlay,double distance=1pt,double] (r-1);
		\path[->]
		(2) edge (3) edge (4);
	\end{graph}
}
\def\CutResultY{
	\begin{graph}[x=1.8mm,y=-5.5mm]
		\node[e] (f-1) at (-1.5,1) {};
		\node[e] (f-2) at (-1.5,2) {};
		\node[e] (r-1) at (3.5,1) {};
		\node[o,minimum size=3mm] (2) at (0,1) {b};
		\node[o,minimum size=3mm] (3) at (0,2) {c};
		\node[o,minimum size=3mm] (4) at (2,1) {d};
		\node[e,minimum size=4mm] at (2,0) {};
		\node[e,minimum size=4mm] at (0,2) {};
		\path
		(2) edge[overlay,double distance=1pt,double] (f-1)
		(3) edge[overlay,double distance=1pt,double] (f-2)
		(4) edge[overlay,double distance=1pt,double] (r-1);
		\path[->]
		(2) edge (3);
	\end{graph}
}
\def\CutResultZ{
	\begin{graph}[x=1.8mm,y=-5.5mm]
		\node[e] (f-1) at (-2,0.4) {};
		\node[e] (f-2) at (-2,2) {};
		\node[e] (r-1) at (3.5,1) {};
		\node[o,minimum size=3mm] (2) at (0,1) {b};
		\node[o,minimum size=3mm] (3) at (0,2) {c};
		\node[o,minimum size=3mm] (4) at (2,1) {d};
		\node[e,minimum size=4mm] at (1,0) {};
		\node[e,minimum size=4mm] at (2,2) {};
		\path
		(4) edge[overlay,double distance=1pt,double,out=125,in=0] (f-1)
		(3) edge[overlay,double distance=1pt,double] (f-2)
		(4) edge[overlay,double distance=1pt,double] (r-1);
		\path[->]
		(2) edge (3);
	\end{graph}
}

\begin{example}\label{x:front-cut}
	\figref{f:front-cut} shows graphs $G$ and $G'$ of type $(2,1)$, from which
	frontal subgraphs $F$ and $F'$ of type $(2,2)$ are cut off, yielding the
	results $R$, $R'$, and $R''$. Nodes with the same name (written inside the
	circle)	correspond to each
	other in subgraph relations.
	
\begin{figure}[h]
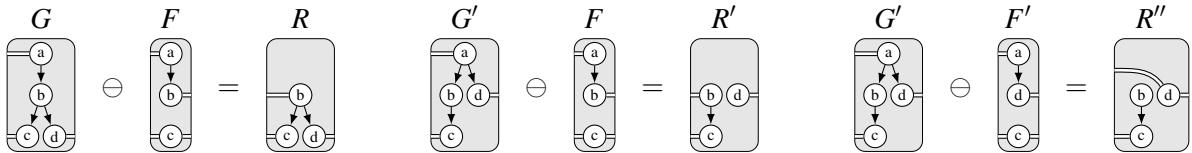

	\centering
	$
		\begin{array}{@{}*{17}{c}@{}}
			G       &             & F                   &   & R
			        &             &
			G'      &             & F                   &   & R'
			        &             &
			G'      &             & F'                  &   & R''         \\
			\TreeGx & \off        & \FrontalSubGraph(b) & = & \CutResultX
			        & \hspace*{1.5em} &
			\TreeGy & \off        & \FrontalSubGraph(b) & = & \CutResultY
			        & \hspace*{1.5em} &
			\TreeGy & \off        & \FrontalSubGraph(d) & = & \CutResultZ
		\end{array}
	$
	\caption{Cutting off frontal subgraphs.}
	\label{f:front-cut}
\end{figure}
	Our graphical conventions for drawing graphs are as follows. Binary edges
	like the ones in this example are drawn as arrows. Edges of other ranks are
	drawn as boxes connected to their attached nodes by lines.
	If there is more than one edge label, labels are
	ascribed to the arrows or written inside the boxes, or we may distinguish
	differently labeled edges by drawing them in different styles.
	Throughout the paper, we reserve the binary special edge label $\invis$,
	the	``invisible label'', to mean ``unlabeled''. Graphs over $\{\invis\}$ are
	called \emph{unlabeled graphs}.
	
	Front and rear interface nodes are connected with double lines to the left 
	and right border, respectively, of the graph. Both sequences are ordered from
	top to bottom. The graph $F$, for instance, thus has type $(2,2)$.

	Note that even though $F$ and $F'$ are isomorphic,
	they are different frontal subgraphs of $G'$, leading to different results
	$R'=G'\off F$ and $R''=G'\off F'$. Another fact worth noting is that cutting $F$
	and $F'$ off $G'$
	also removes the other edge attached to node $a$, which would otherwise dangle.
	Thus, $\off$ is not the inverse of graph concatenation.
\end{example}

A basic property of the cut operation is that it is compatible with graph concatenation:
\begin{fact}\label{f:off-gcat}
  For all graphs $G$, $\Gamma$, $\Gamma'$ of appropriate types, it holds that
  $(G \off\Gamma ) \off \Gamma' = G \off (\Gamma \cdot \Gamma')$.
\end{fact}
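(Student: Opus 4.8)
The plan is to prove the identity by showing that, under the type conditions that make both sides defined, the graphs $(G\off\Gamma)\off\Gamma'$ and $G\off(\Gamma\cdot\Gamma')$ are literally equal, comparing their node sets, edge sets, attachment and labelling functions, and interfaces in turn. Write $G\in\GG_\Voc^{(i,j)}$, $\Gamma\in\GG_\Voc^{(i,k)}$, and $\Gamma'\in\GG_\Voc^{(k,l)}$, so that $\Gamma$ is a frontal subgraph of $G$, $\front_{\Gamma'}=\rear_\Gamma$, and $\Gamma'$ is a frontal subgraph of $G\off\Gamma$; by \defref{d:graphops} the concatenation $\Gamma\cdot\Gamma'$ is glued along $\rear_\Gamma=\front_{\Gamma'}$, so that $\nd\Gamma\cap\nd{\Gamma'}=[\rear_\Gamma]=[\front_{\Gamma'}]$. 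A preliminary observation, which I would dispatch as routine, is that these hypotheses are equivalent to $\Gamma\cdot\Gamma'$ being a frontal subgraph of $G$, so that both sides are defined under the same conditions; the only case needing a word is a node $v\in\nd\Gamma\cap[\rear_G]$, which lies in $[\rear_\Gamma]\subseteq\nd{\Gamma'}$ and is therefore governed by the interface condition for $\Gamma'$.

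For the node sets, abbreviate the set of nodes deleted by a cut as $D_\Gamma=\nd\Gamma\setminus[\rear_\Gamma]$, and analogously $D_{\Gamma'}=\nd{\Gamma'}\setminus[\rear_{\Gamma'}]$ and $D_{\Gamma\cdot\Gamma'}=(\nd\Gamma\cup\nd{\Gamma'})\setminus[\rear_{\Gamma'}]$. Then $\nd{((G\off\Gamma)\off\Gamma')}=\nd G\setminus(D_\Gamma\cup D_{\Gamma'})$ whereas $\nd{(G\off(\Gamma\cdot\Gamma'))}=\nd G\setminus D_{\Gamma\cdot\Gamma'}$, so the key step is the set identity $D_\Gamma\cup D_{\Gamma'}=D_{\Gamma\cdot\Gamma'}$. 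This is where the gluing condition does the work: a node of $\Gamma$ outside $[\rear_\Gamma]$ cannot lie in $\nd{\Gamma'}$ and hence is absent from $[\rear_{\Gamma'}]$, while a shared node in $[\rear_\Gamma]$ that fails to reach $[\rear_{\Gamma'}]$ is already captured by $D_{\Gamma'}$. Carrying out the two inclusions along these lines gives the identity, and the node sets agree.

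The edge sets are treated similarly, but the two-stage bookkeeping is the delicate part and is the step I expect to be the main obstacle. Cutting $\Gamma$ off $G$ removes $\ed\Gamma$ together with every edge of $G$ incident with a node of $D_\Gamma$, and the subsequent cut of $\Gamma'$ removes $\ed{\Gamma'}$ together with every surviving edge incident with a node of $D_{\Gamma'}$; on the right, cutting $\Gamma\cdot\Gamma'$ removes $\ed\Gamma\cup\ed{\Gamma'}$ together with every edge of $G$ incident with a node of $D_{\Gamma\cdot\Gamma'}$. Using $D_\Gamma\cup D_{\Gamma'}=D_{\Gamma\cdot\Gamma'}$ and the fact that an edge dropped in the first stage either lies in $\ed\Gamma$ or touches $D_\Gamma$, while one dropped in the second either lies in $\ed{\Gamma'}$ or touches $D_{\Gamma'}$, both computations delete exactly the set $\ed\Gamma\cup\ed{\Gamma'}\cup\{e\in\ed G:\att_G(e)\text{ touches }D_{\Gamma\cdot\Gamma'}\}$ from $\ed G$. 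Since $\att$ and $\lab$ on all intermediate graphs are restrictions of those of $G$, they agree on the common edge set; finally the front interface of each side is $\rear_{\Gamma'}$ and the rear interface of each side is $\rear_G$, so the two graphs coincide.
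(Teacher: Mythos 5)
Your proof is correct. Note that the paper offers no proof of \factref{f:off-gcat} at all---it is stated as a bare Fact, treated as a routine consequence of \defref{d:graphops} and \defref{d:frontal-subgraph}---so there is no official argument to compare yours against; what you give is the natural definition-unwinding that the authors evidently considered too routine to spell out. You also identify the crux correctly: since $\Gamma'$ lives inside $G\off\Gamma$ and $[\front_{\Gamma'}]=[\rear_\Gamma]$, one gets $\nd\Gamma\cap\nd{\Gamma'}=[\rear_\Gamma]$, from which the deleted-node identity $(\nd\Gamma\setminus[\rear_\Gamma])\cup(\nd{\Gamma'}\setminus[\rear_{\Gamma'}])=(\nd\Gamma\cup\nd{\Gamma'})\setminus[\rear_{\Gamma'}]$ follows, and this single identity drives both the node and the edge bookkeeping (an edge removed in either two-stage cut is exactly one that lies in $\ed\Gamma\cup\ed{\Gamma'}$ or touches a deleted node), while the interfaces match immediately by definition. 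Your preliminary observation that the hypotheses making the left-hand side defined are exactly those making $\Gamma\cdot\Gamma'$ a frontal subgraph of $G$ is also correct and worth recording, since the Fact's phrase ``of appropriate types'' tacitly presupposes it.
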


% A short name for the \usebox command
\newcommand{\UB}[1]{\usebox{#1}}
% A graph of type (0,0) with a loop edge
\def\Gloop{
  \begin{graph}[x=5mm,y=-5mm]
    \inode(x)(1,1)
    \invnode(y)(1,2)
    \path
    (x) edge[->,loop below] ()
    ;
  \end{graph}
}
% A boxed graph of type (0,0) with a loop edge
\newsavebox\Bloop \sbox\Bloop{\Gloop}
% A graph of type (0,2) with an upward edge 
\def\GedgeU{
  \begin{graph}[x=5mm,y=-5mm]
    \rnode(Up-x)(1,1)
    \rnode(Up-y)(1,2)
    \path
    (Up-y) edge[->] (Up-x)
    ;
  \end{graph}
}
% A boxed graph of type (0,2) with an upward edge
\newsavebox\BedgeU \sbox\BedgeU{\GedgeU}
%
% A graph of type (1,1) with a forward  edge (to the right)
\def\GoutE{
  \begin{graph}[x=5mm,y=-5mm]
    \fnode(Ef-x)(1,1)
    \rnode(Ef-z)(2,1)
    % \node[hyper,glass,minimum size=6mm]            at (1.5,1) {};
    % \fnode(Ef-y)(1,2) 
    % \node[e] (r-Ef-y) at (2.4,2) {};
    % \rearptr(Ef-y)
    \path
    (Ef-x) edge[->] (Ef-z)
    ;
  \end{graph}
}
% A boxed graph of type (1,1) with a forward  edge (to the right)
\newsavebox\BoutE \sbox\BoutE{\GoutE}
%
% A graph of type (1,0) with a "curly"  edge
\def\Guncurlyedge{
  \begin{graph}[x=5mm,y=-5mm]
    \fnode(Gu-x)(1,1)
    \inode(Gu-y)(2,1)
    % \node[hyper] (Gu-L) at (2,1) {\scriptsize$\mathtt{L}$};
    % \node[hyper,glass,minimum size=6mm]            at (2,1) {};
    \path
    (Gu-x) edge[curly] (Gu-y)
    ;
  \end{graph}
}
% % A boxed graph of type (1,0) with a unary L-edge 
% \newsavebox\Buncurlyedge \sbox\Buncurlyedge{\Guncurlyedge}
%
%
% A graph of type (2,0) with a downward  edge 
\def\GedgeD{
  \begin{graph}[x=5mm,y=-5mm]
    \fnode(Do-z)(1,1)
    \fnode(Do-y)(1,2)
    \path
    (Do-z) edge[->] (Do-y)
    ;
  \end{graph}
}
% A boxed graph of type (2,0) with a downward  edge 
\newsavebox\BedgeD \sbox\BedgeD{\GedgeD}
%
% A graph of type (0,0) with a node 
\def\Gnode{
  \begin{graph}[x=5mm,y=-2.5mm]
    \inode(x)(1,1.5)
    \node[glass,minimum size=8pt] at (1,1) {};
    \node[glass,minimum size=8pt] at (1,2) {};
    % \path
    % (x) edge[->,loop below] ()
    ;
  \end{graph}
}
% A boxed graph of type (0,0) with a node 
\newsavebox\Bnode \sbox\Bnode{\Gnode}
%

% the empty graph <> of type (0,0)
\def\Gempty{
  \begin{graph}[x=5mm,y=-5mm]
    \invnode(Idz-x)(1,1)
    \invnode(Idz-y)(1,2)
  \end{graph}
}
% the boxed empty graph <> of type (0,0)
\newsavebox{\Bempty} \sbox\Bempty{\Gempty}
% the graph of type (0,1) with a single node
\def\Grnode{
  \begin{graph}[x=5mm,y=-5mm]
    \rnode(N-x)(1,1)
    \invnode(N-y)(1,2)
  \end{graph}
}
% the boxed graph of type (0,1) with a single node
\newsavebox{\Brnode}   \sbox\Brnode{\Grnode}
% the graph of type (1,0) with a backword edge (to the left)
\def\GedgeB{
  \begin{graph}[x=5mm,y=-5mm]
    \fnode(I-x)(1,1)
    \inode(I-y)(1,2)
    \path
    (I-y) edge[->] (I-x)
    ;
  \end{graph}
}
% the boxed graph of type (1,0) with a backword edge (to the left)
\newsavebox{\BedgeB} \sbox \BedgeB{\GedgeB}
% the graph of type (1,2) with two nodes
\def\Gnodenode{
  \begin{graph}[x=5mm,y=-5mm]
    \frnode(NN-x)(1,1)
    \rnode(NN-y)(1,2)
  \end{graph}
}
% the boxed graph of type (1,2) with two nodes
\newsavebox{\Bnodenode}
\sbox\Bnodenode{\Gnodenode}
% the graph of type (0,0) with a node with two in-edges
\def\GinEE{
  \begin{graph}[x=2mm,y=-5mm]
    \begin{scope}
      \inode(II-x)(1.5,1)
      \inode(II-y)(1,0)
      \inode(II-z)(2,0)
      \path[->] (II-y) edge (II-x) (II-z) edge (II-x)
      ;
    \end{scope}
  \end{graph}
}
% the boxed graph of type (0,0) with a node with two in-edges
\newsavebox{\BinEE} \sbox\BinEE{\GinEE}
% the graph of type (0,0) with a node with three out-edges
\def\GoutEEE{
  \begin{graph}[x=4.5mm,y=-5mm]
    \inode(OOO-x)(1.5,1)
    \inode(OOO-y)(1,2)
    \inode(OOO-z)(1.5,2)
    \inode(OOO-w)(2,2)
    \path
    (OOO-x) edge[->] (OOO-y)  edge[->] (OOO-z) edge[->] (OOO-w)
    ;
  \end{graph}
}
% the graph of type (0,0) with a node with three out-edges
\newsavebox{\BoutEEE}
\sbox\BoutEEE{\GoutEEE}
% the graph of type (0,0) with two parallel edges
\def\Gparedge{
  \begin{graph}[x=5mm,y=-5mm]
    \inode(PA-x)(1,1)
    \inode(PA-y)(1,2)
    \path
    (PA-x) edge[->,bend right] (PA-y) edge[->,bend left] (PA-y)
    ;
  \end{graph}
}
% the graph of type (0,0) with two parallel edges
\newsavebox{\Bparedge}
\sbox \Bparedge{\Gparedge}

\section{Alternating  Graph Automata}\label{s:afga}

We now recall the definition of alternating graph automata of \cite{Bruggink-Huelsbusch-Koenig:12,dhm-icgt25a}.

\begin{defn}% [Alternating  Graph Automata]
  \label{d:afga}
  An \emph{alternating  graph automaton} over $\Voc$
  is given by $\aut A = (\Voc, Q, \Delta,\allowbreak q_0, Q_\forall)$ where
  \begin{itemize}
    \item $\Voc$ is a ranked set of edge labels,
    \item $Q$ is a ranked set of \emph{states},
    \item $\Delta \subseteq \bigcup_{m,n\in\Nat}\big(Q^{(m)} \times \GG_\Voc^{(m,n)} \times Q^{(n)}\big)$ is the
          set of \emph{transitions},
    \item $q_0\in Q$ is the \emph{initial state}, and
    \item $Q_\forall \subseteq Q$ is the subset of
          \emph{universal states}.
  \end{itemize}
  The set $Q \setminus Q_\forall$ of \emph{existential} states is denoted by $Q_\exists$.
  A \emph{permutation cycle} of $\aut A$ is a non-empty sequence
  $\delta_1\delta_2\cdots\delta_n\in\Delta^*$ where
  $\delta_i=(q_i,\Gamma_i,q_i')$, $\Gamma_i$ is a permutation graph,
  and $q_i'=q_{(i\bmod n)+1}$ for
  each $i\in[n]$.
\end{defn}

\begin{example}\label{x:auto}%
  \figref{f:tree-auto} shows an example of an alternating graph automaton. 
  In such illustrations of alternating graph automata, they
  are drawn as transition diagrams. The dangling incoming
  arrow indicates the initial state~$q_0$, and transitions
  $(q,\Gamma,q')$ are drawn as arrows from $q$ to $q'$ with the graph
  $\Gamma$ drawn on it. The interior of existential states is white with
  a small inscribed~$\exists$, and that of universal states is black
  with a small inscribed~$\forall$.
\end{example}
\begin{figure}[tb]
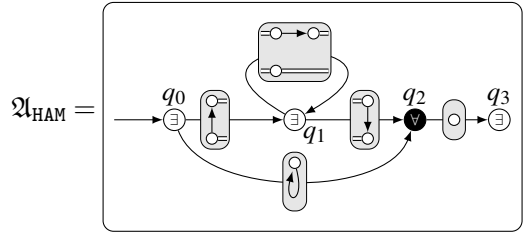

  \centering$
    \aut A_{\mathtt{HAM}} 
      =        
      \begin{automaton}[x=16mm,y=5mm]
        \node[state]           (q_0) at (0,0)   {\tiny$\exists$};
        \node[state]           (q_1) at (1,0)   {\tiny$\exists$};
        \node[state,universal] (q_2) at (2,0)   {\tiny$\forall$};
        \node[state]           (q_3) at (2.7,0) {\tiny$\exists$};
        \node[e]       at ($(q_0)+(0,0.6)$)     {\small$q_0$};
        \node[e]       at ($(q_1)+(0.17,-0.4)$) {\small$q_1$};
        \node[e]       at ($(q_2)+(0,0.6)$)     {\small$q_2$};
        \node[e]       at ($(q_3)+(0,0.6)$)     {\small$q_3$};
        \draw[->] (-0.5,0) -- (q_0);
        \path[->]
          (q_0) edge node[pos=0.3] {\UB{\BedgeU}} (q_1)
                edge[looseness=0.8,out=290,in=250] 
                     node {\UB{\Bloop}} (q_2)
          (q_1) edge[looseness=20,min distance=22mm,loop,out=150,in=30]
                     node {\UB{\BoutEN}} (q_1)
                edge node[pos=0.6] {\UB{\BedgeD}}  (q_2) 
          (q_2) edge node[pos=0.45] {\UB{\Bnode}}        (q_3) 
          ;
      \end{automaton}
  $
\caption{An alternating graph automaton accepting graphs containing a Hamiltonian cycle.}
\label{f:tree-auto}
\end{figure}

In~\cite{dhm-icgt25a}, the semantics of alternating graph automata is defined
by means of their \emph{configuration graphs}. In this paper, we will  
apply this approach not only to alternating graph automata, but also to define the semantics
of systems of graph formulas (in \sectref{s:graph formulas}). For this, we
now introduce \emph{evaluation graphs}, a slightly more abstract version of
configuration graphs.

Evaluation graphs have existential and universal nodes which, when instantiated to the case
of alternating graph automata, will correspond to configurations of the automaton in
existential and universal states, respectively. Iteratively, each node receives a truth
value as the disjunction or conjunction, respectively, of the truth values of its successors.
This process starts at the nodes which do not have outgoing edges, whose truth values are
either false (for existential nodes) or true (for universal ones). Since the graph may contain
cycles, the truth values at some nodes may remain indeterminate.

\begin{defn}[Evaluation Graph and Truth Assignment Evolution]\label{dfn:evolution}
  An \emph{evaluation graph} is an unlabeled graph in which every node is classified as
  either \emph{universal} or \emph{existential}.
  Given such an evaluation graph $\egraph$,
  we let~$\ass_\egraph$ denote the set of all partial functions $\alpha\colon\nd \egraph\to
    \{\true,\false\}$
  assigning truth values to (some of) the nodes in~$\egraph$. A truth assignment
  $\alpha\in\ass_\egraph$ \emph{evolves} (directly) into the truth assignment $\alpha'\in\ass_\egraph$
  given as follows, for all $c\in\nd \egraph$: if
  \begin{enumerate}[label=(\arabic*)]
    \item $c$ is universal and
          \begin{enumerate}[label=(\arabic{enumi}.\arabic*)]
            \item\label{case:U+} has only outgoing edges to nodes $c'\in\nd \egraph$ such that
                  $\alpha(c')=\true$ or
            \item\label{case:U-} has an outgoing edge to some $c'\in\nd \egraph$ such that
                  $\alpha(c')=\false$, or
          \end{enumerate}
    \item $c$ is existential and
          \begin{enumerate}[label=(\arabic{enumi}.\arabic*)]
            \item\label{case:E-} has only outgoing edges to nodes $c'\in\nd \egraph$ such that
                  $\alpha(c')=\false$ or
            \item\label{case:E+} has an outgoing edge to some $c'\in\nd \egraph$ such that
                  $\alpha(c')=\true$
          \end{enumerate}
  \end{enumerate}
  then
  \[
    \alpha'(c)=\left\{\begin{array}{ll}
      \true  & \text{in cases \ref{case:U+} and~\ref{case:E+}}  \\
      \false & \text{in cases \ref{case:U-} and~\ref{case:E-}.} \\
    \end{array}\right.
  \]
  If none of the conditions \ref{case:U+}--\ref{case:E+} is satisfied, then
  $\alpha'(c)=\alpha(c)$.

  The function that maps every truth assignment to the one it evolves
  into is denoted by $\evo_\egraph$, i.e., in the case above $\alpha'=\evo_\egraph(\alpha)$.
\end{defn}

Let $\alpha_\bot$ be the completely undefined truth assignment. It is shown in
\cite{dhm-icgt25a} (for configuration graphs, but the proof is the same) that the sequence
$(\evo_\egraph^i(\alpha_\bot))_{i\in\Nat}$ has a fixed point, in the following denoted by
$\evo_\egraph^*(\alpha_\bot)$. More precisely, for truth assignments
$\alpha$ and $\alpha'$, let $\alpha\sqsubseteq\alpha'$ if $\alpha'(c)=\alpha(c)$ for all
$c\in\nd \egraph$ such that $\alpha(c)$ is defined. Then the
following holds:

\begin{fact}[Part of {\cite[Lemma 1]{dhm-icgt25a}}]\label{fa:fp}
  For every evaluation graph $\egraph$, the fixed point $\alpha_E^*=\evo_\egraph^*(\alpha_\bot)$
  exists and it is the smallest fixed point of $\evo_\egraph$ with respect to $\sqsubseteq$.
\end{fact}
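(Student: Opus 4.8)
The plan is to treat $(\ass_\egraph,\sqsubseteq)$ as a finite poset with least element $\alpha_\bot$ and to analyse the iteration $\alpha_i:=\evo_\egraph^i(\alpha_\bot)$ in the style of a Kleene fixed-point argument. The obvious temptation is to invoke monotonicity of $\evo_\egraph$ and close by the standard theorem, but here lies the main obstacle: $\evo_\egraph$ is \emph{not} monotone on all of $\ass_\egraph$. For instance, a universal node $c$ may carry the value $\true$ under some $\alpha$ while one of its successors is still undefined; extending $\alpha$ to a $\beta$ in which that successor becomes $\false$ makes \ref{case:U-} fire, so $\evo_\egraph(\beta)(c)=\false$ although $\evo_\egraph(\alpha)(c)=\true$. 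Hence I cannot quote Kleene's theorem verbatim and must argue along the canonical sequence directly.

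To control this, I would introduce a \emph{soundness invariant}: call $\alpha$ \emph{justified} if, for every node $c$ on which $\alpha$ is defined, the value $\alpha(c)$ is the one the rules would assign from $\alpha$ itself, i.e. a universal $c$ with $\alpha(c)=\true$ has all successors true, a universal $c$ with $\alpha(c)=\false$ has a false successor, and symmetrically for existential nodes through \ref{case:E-} and \ref{case:E+}. The core is then a simultaneous induction showing, for every $i$, that $\alpha_i$ is justified and that $\alpha_i\sqsubseteq\alpha_{i+1}$. The inclusion follows from justifiedness of $\alpha_i$: for each defined $c$ the invariant guarantees that the appropriate firing condition (\ref{case:U+}, \ref{case:U-}, \ref{case:E-} or \ref{case:E+}) still applies to $\alpha_i$ and re-derives the very same value, so the ``keep the old value'' clause can never silently conflict with a firing rule. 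Justifiedness of $\alpha_{i+1}$ then follows from that of $\alpha_i$ together with the freshly established $\alpha_i\sqsubseteq\alpha_{i+1}$, since the successor values witnessing each firing are preserved upward. As $\ass_\egraph$ is finite, the ascending sequence stabilises after finitely many steps, and its stationary value is a fixed point $\alpha_E^*=\evo_\egraph^*(\alpha_\bot)$.

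It remains to show $\alpha_E^*$ is the $\sqsubseteq$-least fixed point. For an arbitrary fixed point $\beta$ of $\evo_\egraph$, I would prove $\alpha_i\sqsubseteq\beta$ for all $i$ by induction on $i$, the base case being $\alpha_\bot\sqsubseteq\beta$. In the step, consider each $c$ on which $\alpha_{i+1}=\evo_\egraph(\alpha_i)$ is defined: if $\alpha_{i+1}(c)$ was produced by a firing condition, the successor values triggering it carry over from $\alpha_i$ to $\beta$ by the hypothesis $\alpha_i\sqsubseteq\beta$, so the same condition fires on $\beta$ and, using $\beta=\evo_\egraph(\beta)$, forces $\beta(c)=\alpha_{i+1}(c)$; if $\alpha_{i+1}(c)$ was merely kept from $\alpha_i$, then $\beta(c)=\alpha_i(c)=\alpha_{i+1}(c)$ directly by the hypothesis. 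Taking the limit gives $\alpha_E^*\sqsubseteq\beta$. Notably this direction needs no invariant, as the ``keep'' case is absorbed by the induction hypothesis; the justified-assignment machinery is required only to guarantee that the iteration climbs rather than cycles, and thus that a fixed point exists at all. The failure of global monotonicity is the one genuine difficulty, and the invariant is precisely the device that circumvents it.
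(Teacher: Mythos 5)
Your argument is correct, but there is nothing in this paper to compare it against: \factref{fa:fp} is not proved here at all, it is imported from \cite{dhm-icgt25a} (the text only remarks that the proof given there for configuration graphs carries over verbatim to evaluation graphs). Judged on its own merits, your proof is sound and self-contained. Your non-monotonicity observation is genuine: for a universal node $c$ with $\alpha(c)=\true$ and an undefined successor, extending that successor to $\false$ makes case \ref{case:U-} fire, so $\evo_\egraph(\beta)(c)=\false$ while $\evo_\egraph(\alpha)(c)=\true$ is merely kept; hence a blanket appeal to Kleene or Knaster--Tarski would indeed be invalid, and some replacement device is needed. Your repair works because the four firing conditions \ref{case:U+}--\ref{case:E+} are themselves monotone with respect to $\sqsubseteq$ (each only requires certain successors to be \emph{defined} with specific values, never to be undefined), so their witnesses persist under extension; the ``justified'' invariant then ensures that along the canonical iteration the keep-clause never conflicts with a firing rule, yielding the ascending chain $\alpha_i\sqsubseteq\alpha_{i+1}$, which stabilizes because evaluation graphs are finite (all graphs in this paper have finite node sets). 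The leastness argument is likewise correct, and you rightly note that it needs only witness persistence together with $\beta=\evo_\egraph(\beta)$, not the invariant. One point worth spelling out: when $\alpha_i$ is justified, every defined value of $\alpha_i$ is re-derived by a firing condition, so the keep-clause case in your justifiedness induction for $\alpha_{i+1}$ is in fact vacuous --- your phrasing (``can never silently conflict'') covers this, but making it explicit would tighten the step.
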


Intuitively, $\alpha_E^*=\evo_\egraph^*(\alpha_\bot)$ is the ``most undefined'' fixed point of
$\evo_\egraph$.
We now recall configuration graphs from \cite{dhm-icgt25a}, a particular type of evaluation
graphs.

\begin{defn}[Configuration Graph]\label{d:configs}
  A \emph{configuration} of an alternating graph automaton
  $\aut A = (\Voc, Q, \Delta,\allowbreak q_0,\allowbreak Q_\forall)$
  is a pair $(q,G)\in\bigcup_{m\in\Nat}\big(Q^{(m)}\times\GG_\Voc^m\big)$.
  It is \emph{universal} if $q$ is, and \emph{existential} otherwise.
  There is a \emph{(transition) step} $(q,G)\trans_\Delta(q',G')$ 
  if there is a transition $(q,\Gamma,q')\in\Delta$ such that
  (an isomorphic copy of) $\Gamma$ is a frontal subgraph of $G$ and
  $G'=G \off \Gamma$.

  The \emph{configuration graph} of $\aut A$ for an input graph
  $G_0\in\GG_\Voc^{\rank(q_0)}$ is the smallest graph $\cg_{\aut A}(G_0)$ over
  configurations containing the \emph{initial configuration} $(q_0,G_0)$ and, for
  each configuration $(q,G)$ that it contains and each step
  $(q,G)\trans_\Delta(q',G')$,
  the node $(q',G')$ and an edge from $(q,G)$ to $(q',G')$ representing this step.
\end{defn}

Obviously, configuration graphs are evaluation graphs. Hence, the semantics of
alternating graph automata can be defined using \factref{fa:fp}, as follows.

\begin{defn}[Languages Accepted by Alternating Graph Automata\label{dfn:semantics}]
  Let $\aut A = (\Voc, Q, \Delta,\allowbreak q_0, Q_\forall)$ be an alternating graph
  automaton.
  An input graph $G_0\in\GG_\Voc^{\rank(q_0)}$ is \emph{accepted} by $\aut A$
  if $\alpha_K^*(q_0,G_0)=\true$ and is \emph{rejected} by $\aut A$ if
  $\alpha_K^*(q_0,G_0)=\false$, where $K$ is the configuration graph of $\aut A$
  for $G$.

  The \emph{lower (upper) language accepted by $\aut A$} is the set
  $\underline L(\aut A)$ (the set $\overline L(\aut A)$, resp.) of all
  graphs
  $G_0\in\GG_\Voc^{\rank(q_0)}$ accepted (not rejected, resp.) by $\aut A$.
\end{defn}

Note that every state~$q$ without outgoing transitions can be considered as accepting or
rejecting, depending on whether it is universal or not.
This is because, for every configuration
$c\in\nd K$ which does not have outgoing edges in~$K$, by the definition
of $\evo_K$ it holds that
$\evo_K(\alpha)(c)=\true$ if $c$ is universal (by case \ref{case:U+}), and $\evo_K(\alpha)(c)=\false$ if~$c$ is
existential (by case \ref{case:E-}), for every assignment~$\alpha$. In particular, this is the case for $c=(q,G)$
if~$q$ has no outgoing transitions at all.

If $\alpha_K^*$ is a total function for all input graphs in
$\GG_\Voc^{\rank(q_0)}$, we have $\underline L(\aut A)=\overline L(\aut A)$. Then
we say that $L(\aut A)=\underline L(\aut A)$ is the \emph{language accepted by $\aut A$}.
As explained in~\cite{dhm-icgt25a}, a sufficient condition for this is that
$\aut A$ does not have permutation cycles. Then
each configuration graph is a directed acyclic graph (DAG) and
the construction of $\alpha_K^*$ as the fixed point of
$\evo_K^0(\alpha_\bot),\evo_K^1(\alpha_\bot),\evo_K^2(\alpha_\bot),\dots$ can be
replaced by a simple recursive evaluation of $\alpha_K^*(c)$ for all $c\in\nd K$.
%More precisely,
%let $C=\{c'\in\nd K\mid c\trans_\Delta c'\}$. If $\aut A$ does not have
%permutation cycles then
%\[
%  \alpha^*(c)=\left\{\begin{array}{ll}
%    \true  & \text{if \begin{tabular}[t]{@{}l@{}}
%                          $c$ is universal and $\alpha^*(c')=\true$ for all $c'\in C$ or \\
%                          $c$ is existential and $\alpha^*(c')=\true$ for some $c'\in C$ \\
%                        \end{tabular}} \\
%    \false & \text{otherwise.}
%  \end{array}\right.
%\]
%

As shown in \cite[Theorem~2]{dhm-icgt25a} alternating graph automata can accept
graph languages that are PSPACE-complete.

\begin{example}
  The automaton shown in \figref{f:tree-auto} detects whether a graph contains a
  Hamiltonian cycle or not, basically by checking whether there is a cycle in
  the graph such that there are no more nodes. It does this in the following
  way: Using the transitions from state~$q_0$ via~$q_1$ to~$q_2$, the automaton
  finds and follows a cycle of at least two binary edges. Using the transition
  from state~$q_0$ directly to~$q_2$ instead, it finds a single loop. In both cases the
  automaton ignores all other edges connected to nodes of the cycle
  (by cutting them off). The
  automaton accepts the graph if it finds no more nodes after reaching~$q_2$.
  Otherwise it reaches the rejecting state~$q_3$, so it rejects the graph.
\end{example}

\section{Graph Expressions and Systems of Graph Formulas}\label{s:graph formulas}

We now define graph expressions and, based on them, systems of graph formulas with
variables. Similarly to ordinary regular expressions for string languages, graph
expressions construct graph languages from singleton languages by means of union,
concatenation, and Kleene star as operators.

%We first lift graph concatenation to languages of graphs, and define union,
%power, and Kleene star of graph languages.

\begin{defn}[Graph Expressions]
  Let $i,j,k\in\Nat$.

  \begin{itemize}
    \item The (\emph{typed}) \emph{concatenation} of
          $\LL \subseteq \GG_\Voc^{(i,k)}$ and $\M \subseteq \GG_\Voc^{(k,j)}$ is
          $\LL \cdot \M = \{ G \cdot H \mid G \in \LL, H \in \M \}$.
    \item The (\emph{Kleene}) \emph{star} $\LL^\ast$
          of $\LL \subseteq \GG_\Voc^{(i,i)}$ is the smallest graph
          language containing $\Idg i$ and, for all graphs $G\in\LL$ and
          $H\in\LL^\ast$, the graph $G\cdot H\in\LL^\ast$.
  \end{itemize}

  The set $\Exp^{(i,j)}_\Voc$ of \emph{graph expressions} $\ex$ of type $(i,j)$ over
  $\Voc$ and the languages $L(\ex)$ they denote are defined inductively, as follows:
  \begin{enumerate}
    \item $\emptyset\in\Exp^{(i,j)}_\Voc$ with $L(\emptyset) = \emptyset$.
    \item If $G \in \GG_\Voc^{(i,j)}$, then
          $G \in \Exp^{(i,j)}_\Voc$ with $L(G) = \{ G \}$.
    \item If $\ex_1, \ex_2 \in \Exp^{(i,j)}_\Voc$, then 
          $\ex_1 \oplus\ex_2 \in \Exp^{(i,j)}_\Voc$ with
          $L(\ex_1 \oplus \ex_2) = L(\ex_1) \cup L(\ex_2)$.
    \item If $\ex_1 \in \Exp^{(i,j)}$ and $\ex_2 \in \Exp^{(j,k)}_\Voc$, then
          $\ex_1 \gcat \ex_2 \in \Exp^{(i,k)}_\Voc$ with
          $L(\ex_1 \gcat \ex_2) = L(\ex_1) \cdot L(\ex_2)$.
    \item If $\ex\in \Exp^{(i,i)}_\Voc$, then $\ex^\oast \in \Exp^{(i,i)}_\Voc$ with
          $L(\ex^\oast) = L(\ex)^\ast$.%\mm{Klammerausdrücke entfernt}
  \end{enumerate}
  %  A graph expression $\ex$ is called \emph{permutation-free} if all permutation graphs
  %  that occur as sub-expressions of $\ex$ are identity graphs.
\end{defn}

Note that graph concatenation is associative. For $\LL \subseteq \GG_\Voc^{(i,i)}$,
we may also abbreviate the $n$-fold iterated concatenation of $\LL$ with itself by
$\LL^n$, i.e., $\LL^0=\{\Idg i\}$ and $\LL^{n+1}=\LL\cdot\LL^n$ for $n\in\Nat$.

%% A graph of type (1,0) with a unary L-edge
\def\GunedgeL{
  \begin{graph}[x=5mm,y=-5mm]
    \fnode(Gu-x)(1,1)
    \node[hyper] (Gu-L) at (2,1) {\scriptsize$\mathtt{L}$};
    \node[hyper,glass,minimum size=6mm]            at (2,1) {};
    \path
    (Gu-x) edge (Gu-L)
    ;
  \end{graph}
}
\begin{example}[Paths and Cycles]\label{x:cycle-ex}
  The following graph expressions specify a path from the front node
  to a node attached to a unary edge labeled \texttt{L}, and a cycle in
  a graph (with empty front interface), respectively.
  \begin{align*}
    \mathtt{Path}                                & =
    \raiseb\GoutE^\oast \!\gcat \;\raiseb\GunedgeL &
    \mathtt{Cycle}                               & =
    \raiseb\Gloop \oplus
    \raiseb\GedgeU \gcat\raisebox{0.6ex}{$\GoutEN^\oast$} \!\!\!\gcat \raiseb\GedgeD.
  \end{align*}
  In examples, $\oast$ binds stronger than $\gcat$, and $\gcat$ binds
  stronger than $\oplus$.
\end{example}

\begin{defn}[System of Graph Formulas with Variables]\label{d:eqsystem}%
  For $m\in\Nat$ and ranked sets $\Voc$ and $\Ident$ of edge labels and
  variables, respectively, the set $\RecFormula_{\Voc,\Ident}^m$ of
  \emph{graph formulas with variables} (\emph{formulas} for short) of type $m$ over $\Voc$ and
  $\Ident$ is defined inductively as follows:
  \begin{enumerate}[label=(\arabic*)]
    \item $\true, \false\in\RecFormula_{\Voc,\Ident}^m$;
    \item\label{c:id-case} $\Ident^{(m)}\subseteq\RecFormula_{\Voc,\Ident}^m$;
    \item if $\ex \in \EE_\Voc^{(m,i)}$ and $\fo \in \RecFormula_{\Voc,\Ident}^i$,
          then $\exists(\ex, \fo)\in \RecFormula_{\Voc,\Ident}^m$ and
          $\forall(\ex, \fo) \in \RecFormula_{\Voc,\Ident}^m$;
    \item if $\fo,\fo'\in\RecFormula_{\Voc,\Ident}^m$, then $\neg
            \fo\in\RecFormula_{\Voc,\Ident}^m$,
          $\fo\wedge\fo'\in\RecFormula_{\Voc,\Ident}^m$, and
          $\fo\vee\fo'\in\RecFormula_{\Voc,\Ident}^m$.
  \end{enumerate}
  A formula $\fo$ is called \emph{conjunctive} when $\fo = \true$, when it has
the form $\fo = \fo' \wedge \fo''$, or the form $\fo =
\forall(\ex, \fo')$. In all other cases, $\fo$ is called
\emph{disjunctive}.

  We let $\RecFormula_{\Voc,\Ident}=\bigcup_{m\in\Nat}\RecFormula^m_{\Voc,\Ident}$
  denote the set of all formulas over $\Voc$ and $\Ident$.

  A \emph{system of graph formulas with variables} (\emph{formula system} for short)
  is a function $F\colon\Ident\to\RecFormula_{\Voc,\Ident}$ such that
  $F(\id)\in\RecFormula^m_{\Voc,\Ident}$ for all $\id\in\Ident^{(m)}$.
\end{defn}

We will use the classification of formulas into conjunctive and disjunctive formulas 
later on when defining formula configurations.

In the following, unless explicitly stated otherwise, we assume fixed ranked
sets $\Voc$ and $\Ident$ of edge labels and variables, respectively.

Note that those formulas which can be built without using
case~\ref{c:id-case} in \defref{d:eqsystem}, are precisely the graph formulas
without variables as they are defined in~\cite{dhm-icgt25b}. In other words,
the latter set is a strict subset of the formulas considered here.

Before formalizing the semantics of formula systems, we define the special case
of \emph{acyclic} formula systems.

\begin{defn}[Acyclic formula system]\label{d:nonrec-eqsystem}%
  Given a \emph{formula system} $F\colon\Ident\to\RecFormula_{\Voc,\Ident}$, the
  \emph{dependency graph} $D_F$ of $F$ is an unlabeled graph defined as follows: Its node set is 
  $\nd D_F=\Ident$, and there is an edge from
  $u\in\Ident$ to $v\in\Ident$ whenever $v$ appears as a variable in $F(u)$. The
  formula system $F$ is \emph{acyclic} if $D_F$ is acyclic.
\end{defn}

We now define the semantics of formula systems. A
natural approach would be to define the semantics inductively based on the
structure of the formulas. However, this is challenging because a formula may
directly or indirectly depend on the variable to which it is assigned within
the system, potentially creating cyclic dependencies. For traditional
systems of language equations as introduced in the seminal paper by Mezei and
Wright \cite{Mezei-Wright:67} such dependencies do not pose problems as there
is a least fixed point, i.e., there is a smallest 
assignment of languages to the variables (with respect to set inclusion) such that the language equations are
fulfilled. This is due to the monotonicity of the involved operations. Here,
such monotonicity is lacking, essentially because formulas may contain
negations.

To deal with this situation, 
%We now extend the definition of formula semantics to general formula systems, which
%may contain cyclic dependencies. To handle such cycles, 
we adopt an approach
similar to that used for alternating graph automata. Specifically, we define the
semantics in terms of \emph{formula configuration graphs}, a specialized form of
evaluation graphs. This approach allows us to define the general semantics through
the fixed points of truth value assignments.

\begin{defn}[Formula Configuration Graph (FCG)]\label{d:rec-configs}%
  A \emph{formula configuration} is a triple $(\fo, G, \sign)$ where
  $\fo\in\RecFormula^m_{\Voc,\Ident}$, $G\in\GG_\Voc^m$, and
  $\sign\in\{\spos ,\sneg \}$.
  It is called \emph{universal} if $\sign = \spos $ and $\fo$ is conjunctive,
  or $\sign = \sneg $ and $\fo$ is disjunctive. Otherwise, it is said to be
  \emph{existential}.

  Given a formula system $F\colon\Ident\to\RecFormula_{\Voc,\Ident}$,
  there is a \emph{step} $(\fo,G,sign)\evalstep_F(\fo',G',\sign')$ from
  the formula configuration $(\fo,G,\sign)$ to $(\fo',G',\sign')$ in the
  following cases, for all $\otimes\in\{\vee,\wedge\}$, $\Quant\in\{\forall,\exists\}$,
  and $\id\in\Ident$:
  \begin{itemize}
    \item $(\neg\fo,G,\sign)\evalstep_F(\fo,G,\overline\sign)$ where
          $\overline\spos =\sneg $ and $\overline\sneg =\spos $,
    \item $(\fo_1\otimes\fo_2,G,\sign)\evalstep_F(\fo_1,G,\sign)$ and
          $(\fo_1\otimes\fo_2,G,\sign)\evalstep_F(\fo_2,G,\sign)$,
    \item $(\Quant(\ex,\fo),G,\sign)\evalstep_F(\fo,G\off P,\sign)$ for all
          frontal subgraphs $P\in
            L(\ex)$ of~$G$, and
    \item $(\id,G,\sign)\evalstep_F(F(\id),G,\sign)$.
  \end{itemize}
  Given a formula system $F\colon\Ident\to\RecFormula_{\Voc,\Ident}$ and
  a variable $\id\in\Ident$, the \emph{formula configuration
    graph} (\emph{FCG} for short) of $F$ at $\id$ for an input graph
  $G_0\in\GG_\Voc^m$ is the smallest graph $\fcg_{F,\id}(G_0)$ over formula
  configurations containing the \emph{initial configuration}
  $(\id,G_0,\spos)$ and, for each formula configuration $(\fo,G,\sign)$
  that it contains and each step
  $(\fo,G,\sign)\evalstep_F(\fo',G',\sign')$, the node
  $(\fo',G',\sign')$ and an edge from $(\fo,G,\sign)$ to
  $(\fo',G',\sign')$ representing this step.
\end{defn}

Note that the FCG is finite for any formula system~$F$ and variable~$\id$.

\begin{example}\label{x:FCG}%
  Let $\Voc=\{\invis\}$ with $\rank(\invis)=2$, and
  let $\Ident = \{u,v,w\}$, where each variable has rank~0. We define the
  formula system $H\colon\Ident \to \RecFormula_{\Voc,\Ident}$ by:
  \[
    H(u) = \overbrace{\exists\left(\raiseb\Gloop, v\right)}^{\mu} \vee 
           \overbrace{\exists\left(\raiseb\GedgeU, w\right)}^{\tau}, \qquad
    H(v) = \neg\overbrace{\exists\left(\raiseb\Gnode, \true\right)}^\psi, \qquad
    H(w) = \overbrace{\exists\left(\raiseb\GoutEN, w\right)}^{\phi} \vee 
           \overbrace{\exists\left(\raiseb\GedgeD, v\right)}^{\rho}.
  \]
  We name subformulas $\mu$, $\tau$, $\psi$, $\phi$, and $\rho$ for reference in \figref{f:fcg}.

  The intuitive semantics of $H$ is as follows: the variable $u$ defines graphs
  that either consist of a node with a loop followed by a subgraph described by
  $v$ (subformula~$\mu$), or a graph that starts with an edge and continues with
  a subgraph described by $w$ (subformula~$\tau$). The variable $w$ is satisfied
  by graphs that either start with an edge and are recursively followed by
  another $w$ graph (subformula~$\phi$), or graphs that connect the two front
  interface nodes with a single edge followed by a subgraph satisfying $v$
  (subformula~$\rho$). Finally, $v$ is satisfied if there are no nodes left in
  the graph. All in all, $u$ specifies graphs that contain a Hamiltonian cycle.

  % !TEX root = rec.tex
% The input graph with empty front
\def\Hin{
  \begin{graph}[x=3mm,y=-5mm]
    \node[e] (f-1) at (-0.7,0) {};
    \inode(1)(0,0)
    \inode(2)(0,1)
    \path[->] (1) edge[bend left] (2) edge[loop right] () (2) edge[bend left] (1)
    ;
  \end{graph}
}
% The input graph with empty front
\def\HinNode{
  \begin{graph}[x=3mm,y=-5mm]
    \node[e] (f-1) at (-0.7,0) {};
    % \inode(1)(0,0)
    \inode(2)(0,1)
    % \inode(3)(1,2)
    % \begin{pgfonlayer}{middleground}
    %   \path (f-1) edge[overlay,-,double distance=1pt,double,out=0,in=180] (#1);
    % \end{pgfonlayer}
    % \path[->] (1) edge[bend left] (2) edge[loop right] () (2) edge[bend left] (1)
    % ;
  \end{graph}
}
%
% The input graph with two front nodes
\def\HinTwo#1#2{
  \begin{graph}[x=4mm,y=-5mm]
    \node[e] (f-1) at (-0.7,0) {};
    \node[e] (f-2) at (-0.7,1) {};
    \inode(1)(0,0)
    \inode(2)(0,1)
    % \inode(3)(1,2)
    \begin{pgfonlayer}{middleground}
      \path (f-1) edge[overlay,-,double distance=1pt,double,out=0,in=180] (#1);
      \path (f-2) edge[overlay,-,double distance=1pt,double,out=0,in=180] (#2);
    \end{pgfonlayer}
    \path[->] (1) edge[bend left] (2) edge[loop right] () (2) edge[bend left] (1)
    ;
  \end{graph}
}
% The input graph with two front nodes, minus the left edge
\def\HinTwoMinus#1#2{
  \begin{graph}[x=3mm,y=-5mm]
    \node[e] (f-1) at (-0.7,0) {};
    \node[e] (f-2) at (-0.7,1) {};
    \inode(1)(0,0)
    \inode(2)(0,1)
    % \inode(3)(1,2)
    \begin{pgfonlayer}{middleground}
      \path (f-1) edge[overlay,-,double distance=1pt,double,out=0,in=180] (#1);
      \path (f-2) edge[overlay,-,double distance=1pt,double,out=0,in=180] (#2);
    \end{pgfonlayer}
    \path[->] (1) edge[bend left] (2) edge[loop right] () % (2) edge[bend left] (1)
    ;
  \end{graph}
}
% The input graph with two front nodes, minus the right edge
\def\HinTwoMinusO#1#2{
  \begin{graph}[x=4mm,y=-5mm]
    \node[e] (f-1) at (-0.7,0) {};
    \node[e] (f-2) at (-0.7,1) {};
    \inode(1)(0,0)
    \inode(2)(0,1)
    % \inode(3)(1,2)
    \begin{pgfonlayer}{middleground}
      \path (f-1) edge[overlay,-,double distance=1pt,double,out=0,in=180] (#1);
      \path (f-2) edge[overlay,-,double distance=1pt,double,out=0,in=180] (#2);
    \end{pgfonlayer}
    \path[->] (2) edge[bend left] (1) (1) edge[loop right] () % (2) edge[bend left] (1)
    ;
  \end{graph}
}
  \begin{figure}[t]
    \begin{minipage}[b]{0.2\linewidth}
      \centering$G = \raiseb\Hin$
    \caption{A graph}
    \label{f:fcg-inp}
  \end{minipage}
  \hfill
  \begin{minipage}[b]{0.78\linewidth}
    \centering
    \begin{tikzpicture}[x=40mm,y=-13mm]
      \node[acc] (0) at (0,0) {$\CFGr{u}{\Hin}{\spos}$};
      \node[acc] (0-1) at ($(0)+(0,1)$) {$\CFGr{\mu\vee\tau}{\Hin}{\spos}$};
      \node[rej] (0-1-1) at ($(0-1)+(-0.8,1)$) {$\CFGr{\mu}{\Hin}{\spos}$}; 
      \node[rej] (0-1-1-1) at ($(0-1-1)+(0,1)$) {$\CFGr{v}{\HinNode}{\spos}$}; 
      \node[rej] (0-1-1-1-1) at ($(0-1-1-1)+(0,1)$) {$\CFGr{\neg\psi}{\HinNode}{\spos}$}; 
      \node[rej] (0-1-1-1-1-1) at ($(0-1-1-1-1)+(0,1)$) {$\CFGr{\psi}{\HinNode}{\sneg}$}; 
      \node[rej] (0-1-1-1-1-1-1) at ($(0-1-1-1-1-1)+(0,1)$) {$\CFGr{\mathit{true}}{\Gempty}{\sneg}$}; 
      \node[acc] (0-1-2) at ($(0-1)+(0.8,1)$) {$\CFGr{\tau}{\Hin}{\spos}$}; 
      \node[acc] (0-1-2-1) at ($(0-1-2)+(-0.6,1)$) {$\CFGr{w}{\HinTwoMinus 12}{\spos}$}; 
      \node[acc] (0-1-2-1-1) at ($(0-1-2-1)+(0,1)$) {$\CFGr{\phi\vee\rho}{\HinTwoMinus 12}{\spos}$}; 
      \node[rej] (0-1-2-1-1-1) at ($(0-1-2-1-1)+(-0.3,1)$) {$\CFGr{\phi}{\HinTwoMinus 12}{\spos}$}; 
      \node[acc] (0-1-2-1-1-2) at ($(0-1-2-1-1)+(0.3,1)$) {$\CFGr{\rho}{\HinTwoMinus 12}{\spos}$}; 
      \node[acc] (0-1-2-1-1-2-1) at ($(0-1-2-1-1-2)+(0.6,1)$) {$\CFGr{v}{\Gempty}{\spos}$}; 
      \node[acc] (0-1-2-1-1-2-1-1) at ($(0-1-2-1-1-2-1)+(0,1)$) {$\CFGr{\neg\psi}{\Gempty}{\spos}$}; 
      \node[acc] (0-1-2-1-1-2-1-1-1) at ($(0-1-2-1-1-2-1-1)+(0,1)$) {$\CFGr{\psi}{\Gempty}{\sneg}$}; 
      \node[acc] (0-1-2-2) at ($(0-1-2)+(0.6,1)$) {$\CFGr{w}{\HinTwoMinusO 21}{\spos}$}; 
      \node[acc] (0-1-2-2-1) at ($(0-1-2-2)+(0,1)$) {$\CFGr{\phi\vee\rho}{\HinTwoMinusO 21}{\spos}$}; 
      \node[rej] (0-1-2-2-1-1) at ($(0-1-2-2-1)+(-0.3,1)$) {$\CFGr{\phi}{\HinTwoMinusO 21}{\spos}$}; 
      \node[acc] (0-1-2-2-1-2) at ($(0-1-2-2-1)+(0.3,1)$) {$\CFGr{\rho}{\HinTwoMinusO 21}{\spos}$}; 
	  \node[e] at ($(0)+(0.4,0)$)     {\small$\exists, \true$};
      \node[e] at ($(0-1)+(0.52,0)$) {\small$\exists, \true$};
      \node[e] at ($(0-1-1)+(0,-0.5)$) {\small$\exists, \false$};
      \node[e] at ($(0-1-1-1)+(-0.2,-0.43)$) {\small$\exists, \false$};
      \node[e] at ($(0-1-1-1-1)+(-0.2,-0.43)$) {\small$\exists, \false$};
      \node[e] at ($(0-1-1-1-1-1)+(-0.2,-0.43)$) {\small$\forall, \false$};
      \node[e] at ($(0-1-1-1-1-1-1)+(-0.2,-0.5)$) {\small$\exists, \false$};
      \node[e] at ($(0-1-2)+(0,-0.5)$) {\small$\exists, \true$};
      \node[e] at ($(0-1-2-1)+(0,-0.5)$) {\small$\exists, \true$};
      \node[e] at ($(0-1-2-1-1)+(-0.3,-0.5)$) {\small$\exists, \true$};
      \node[e] at ($(0-1-2-1-1-1)+(-0.18,-0.49)$) {\small$\exists, \false$};
      \node[e] at ($(0-1-2-1-1-2)+(0.15,-0.47)$) {\small$\exists, \true$};
      \node[e] at ($(0-1-2-2)+(0,-0.5)$) {\small$\exists, \true$};
      \node[e] at ($(0-1-2-2-1)+(0.3,-0.47)$) {\small$\exists, \true$};
      \node[e] at ($(0-1-2-2-1-1)+(-0.15,-0.47)$) {\small$\exists, \false$};
      \node[e] at ($(0-1-2-2-1-2)+(0.18,-0.47)$) {\small$\exists, \true$};
      \node[e] at ($(0-1-2-1-1-2-1)+(0.35,0)$) {\small$\exists, \true$};
      \node[e] at ($(0-1-2-1-1-2-1-1)+(0.4,0)$) {\small$\exists, \true$};
      \node[e] at ($(0-1-2-1-1-2-1-1-1)+(0.35,0)$) {\small$\forall, \true$};

      \path
      (0) edge[->] (0-1)
      (0-1) edge[->] (0-1-1) edge[->] (0-1-2)
      (0-1-1) edge[->] (0-1-1-1) 
      (0-1-1-1) edge[->] (0-1-1-1-1) 
      (0-1-1-1-1) edge[->] (0-1-1-1-1-1)
      (0-1-1-1-1-1) edge[->] (0-1-1-1-1-1-1) 
      (0-1-2)  edge[->] (0-1-2-1)  edge[->] (0-1-2-2) 
      (0-1-2-1)  edge[->] (0-1-2-1-1) 
      (0-1-2-1-1)  edge[->] (0-1-2-1-1-1)  edge[->] (0-1-2-1-1-2) 
      (0-1-2-1-1-2) edge[->] (0-1-2-1-1-2-1)
      (0-1-2-1-1-2-1) edge[->] (0-1-2-1-1-2-1-1)
      (0-1-2-1-1-2-1-1) edge[->] (0-1-2-1-1-2-1-1-1)
      (0-1-2-2)  edge[->] (0-1-2-2-1) 
      (0-1-2-2-1)  edge[->] (0-1-2-2-1-1)  edge[->] (0-1-2-2-1-2) 
      (0-1-2-2-1-2)  edge[->] (0-1-2-1-1-2-1)  
      ;
    \end{tikzpicture}
    \vspace*{-7mm}
    \caption{The formula configuration graph $\fcg_{H,u}(G)$}
    \label{f:fcg}
  \end{minipage}
\end{figure}

  \figref{f:fcg} shows the formula configuration graph $\fcg_{H,u}(G)$ for the
  graph $G$ shown in \figref{f:fcg-inp}. The labels next to the configurations
  indicate whether they are universal ($\forall$) or existential ($\exists$).
  The indicated truth values will become relevant in \exref{x:FCG-eval}. Note
  that the formula configuration graph $\fcg_{H,u}(G)$ is acyclic, in contrast
  to $H$ itself, which is cyclic.
  % \qed
\end{example}

Since every FCG $K$ is an evaluation graph, we can once again use
\defref{dfn:evolution} to describe how truth values assigned to formula
configurations evolve from an initial assignment. By \factref{fa:fp}, the most
undefined fixed point, denoted by $\alpha_K^*$, exists. 
%As before, we may omit the subscript $K$ and write $\alpha^*$ if $K$ is clear 
%from the context. 
Following the approach in \defref{dfn:semantics}, we thus define:

\begin{defn}[Languages Accepted by Formula Systems\label{dfn:f-semantics}]
  Given a formula system $F\colon\Ident\to\RecFormula_{\Voc,\Ident}$
  and a variable $\id\in\Ident$, an input graph $G_0\in\GG_\Voc^m$
  is \emph{accepted} by $F$ at $\id$ if $\alpha_K^*(\id,G_0,\spos)=\true$ and
  is \emph{rejected} by $F$ at $\id$ if $\alpha_K^*(\id,G_0,\spos)=\false$, 
  where $K$ is the formula configuration graph of $F$ at $\id$.

  The \emph{lower language accepted by $F$ at $\id$} is the set $\underline
    L_\id(F)$ of all graphs $G_0\in\GG_\Voc^m$ accepted by $F$ at $\id$, and the
  \emph{upper language accepted by $F$ at $\id$} is the set $\overline
    L_\id(F)$ of all graphs $G_0\in\GG_\Voc^m$ not rejected by $F$ at $\id$.
\end{defn}

In the special case where $\alpha_K^*$ is a total function for all input graphs in
$\GG_\Voc^m$, we have $\underline L_\id(F)=\overline L_\id(F)$. Then we say that
$L_\id(F)=\underline L_\id(F)$ is the \emph{language accepted by $F$ at~$\id$}.

Note the difference between this definition and \defref{dfn:semantics}: While
an automaton has a particular state
designated as its initial state, no particular variable is singled out as
``initial'' in a formula system.
Therefore, there is not a single pair of upper and lower languages defined by
a formula system, but each variable is associated with its own pair of
languages. To put it in another way, specifying a pair of upper and lower languages
by a formula system requires us to additionally say which variable we are considering.

\begin{example}\label{x:FCG-eval}%
  Continuing \exref{x:FCG}, let $K=\fcg_{H,u}(G)$. The truth value assigned by $\alpha_K^*$ is specified
  for each configuration in \figref{f:fcg}. Since $K$ is acyclic, the
  function $\alpha_K^*$ is total, and $G\in\underline L_u(H)\cap\overline L_u(H)$.%\mm{Vorher stand hier etwas über die von $H$ bei $u$ erkannte Sprache. Wir wissen aber nur, dass $K$ azyklisch ist, über die anderen FCGs wissen wir noch nichts.}
  %\qed
\end{example}

Let us now demonstrate that the semantics of graph formulas without variables, 
as defined in~\cite{dhm-icgt25b}, coincides with the semantics of acyclic 
formula systems.

Every variable $\id\in\Ident$ occurring in an acyclic formula system $F$ can be
identified with an acyclic formula $F^*(\id)$ in the sense of~\cite{dhm-icgt25b},
namely the one obtained by taking $F(\id)$ and recursively replacing every occurrence
of a variable $\id'$ in it by $F^*(\id')$. Conversely, every formula $\fo$
in the sense of~\cite{dhm-icgt25b} can be written as the acyclic formula
system $F(x)=\varphi$ with only one variable~$x$.
In other words, acyclic formula systems do indeed coincide with the graph formulas
of~\cite{dhm-icgt25b}. The following lemma shows that the semantics of acyclic 
formula systems coincides with the inductive semantics definition of graph formulas 
in~\cite{dhm-icgt25b}:

\begin{lemma}\label{l:inductive-semantics}
  For each acyclic formula system $F\colon \Ident\to\RecFormula_{\Voc,\Ident}$ and
  each variable $\id\in\Ident$, it holds that
  $\underline{L}_\id(F) = \overline{L}_\id(F) = \{G \mid G \satisfies_F \id\}$ where
  the satisfaction relation ${\satisfies}_F \subseteq
    \GG_\Voc^m\times\RecFormula_{\Voc,\Ident}^m$ is defined by induction as
    follows. It is the smallest
  relation such that the following hold for every graph $G\in \GG_\Voc^m$ (where
  $\fo,\fo'\in\RecFormula_{\Voc,\Ident}$ and
  $\id\in\Ident$):
  \begin{enumerate}[label=(\arabic*)]
    \item\label{i:sat-true-false} $G \satisfies_F \true$.
    \item\label{i:sat-exists}  $G \satisfies_F \exists(\ex, \fo)$ if $G \off P
            \satisfies_F \fo$ for some frontal subgraph $P \in L(\ex)$
          of~$G$.
    \item\label{i:sat-forall} $G \satisfies_F \forall(\ex, \fo)$ if $G \off P
            \satisfies_F \fo$ for all frontal subgraphs $P \in L(\ex)$
          of~$G$.
    \item\label{i:sat-neg} $G \satisfies_F \neg \fo$ if $G \not\satisfies_F
            \fo$.
    \item\label{i:sat-and} $G \satisfies_F \fo\wedge\fo'$ if $G \satisfies_F
            \fo$ as well as $G \satisfies_F \fo'$.
    \item\label{i:sat-or} $G \satisfies_F \fo\vee\fo'$ if $G \satisfies_F
            \fo$ or $G \satisfies_F \fo'$ (or both).
    \item\label{i:sat-id} $G \satisfies_F \id$ if $G \satisfies_F
            F(\id)$.
  \end{enumerate}
\end{lemma}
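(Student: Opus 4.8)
The plan is to split the double equality into its two halves: the equality $\underline{L}_\id(F)=\overline{L}_\id(F)$ will follow from totality of the fixed point, and the identification with $\{G\mid G\satisfies_F\id\}$ from a sign-aware correspondence between $\alpha_K^*$ and $\satisfies_F$. Throughout I write $K=\fcg_{F,\id}(G_0)$ for the FCG on a fixed input $G_0$.

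First I would show that, because $F$ is acyclic, $K$ is acyclic, and hence (by the discussion following \defref{dfn:semantics}, which applies verbatim to any evaluation graph that is a DAG) $\alpha_K^*$ is total, giving $\underline{L}_\id(F)=\overline{L}_\id(F)$. To see acyclicity I would exhibit a well-founded measure that strictly decreases along every step $\evalstep_F$: the lexicographic pair consisting of the graph size $|\nd G|+|\ed G|$ and an \emph{unfolding size} $u(\fo)$ of the current formula, where $u$ is defined by structural recursion with the clause $u(\id)=u(F(\id))+1$. This recursion terminates exactly because the dependency graph $D_F$ is acyclic, which is also the reason the inductive relation $\satisfies_F$ is itself well-defined on the same order. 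The graph size never increases under $\off$, and it strictly decreases unless the cut-off frontal subgraph $P\in L(\ex)$ is a permutation graph, in which case $G\off P=G$ and $u$ strictly decreases because a quantifier is stripped; the remaining steps ($\neg$, the $\wedge/\vee$ projections, and variable expansion) fix $G$ and strictly decrease $u$. Since $K$ is finite, the existence of this measure makes it a DAG.

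The core of the argument is then the claim that, for every configuration $(\fo,G,\sign)$ occurring in $K$,
\[
  \alpha_K^*(\fo,G,\spos)=\true \iff G\satisfies_F\fo
  \qquad\text{and}\qquad
  \alpha_K^*(\fo,G,\sneg)=\true \iff G\not\satisfies_F\fo .
\]
I would prove both equivalences simultaneously by induction on the measure above, so that the induction hypothesis is available for all $\evalstep_F$-successors, which have strictly smaller measure. The inductive step is a case analysis on the shape of $\fo$ and on $\sign$: in each case one reads off from \defref{d:rec-configs} whether $(\fo,G,\sign)$ is universal or existential, lists its successors, and computes $\alpha_K^*$ from them through the rules of \defref{dfn:evolution}, which on a DAG amount to the conjunction (universal) resp.\ disjunction (existential) of the successors' values; this is matched against the corresponding clause for $\satisfies_F$. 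For instance, with $\sign=\spos$ the formula $\forall(\ex,\fo')$ is conjunctive, hence universal, its successors are the $(\fo',G\off P,\spos)$ for frontal subgraphs $P\in L(\ex)$ of $G$, and the universal rule gives $\true$ exactly when all are $\true$, matching clause~\ref{i:sat-forall}; a negation flips the sign, so $(\neg\fo',G,\spos)$ is existential with single successor $(\fo',G,\sneg)$ and receives $\true$ iff $G\not\satisfies_F\fo'$, matching~\ref{i:sat-neg}; and the base cases $\true,\false$ are leaves, decided directly by cases~\ref{case:U+} and~\ref{case:E-}. The duality between the two lines of the claim is precisely the duality built into the universal/existential classification under the sign flip, so both are discharged together.

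Instantiating the claim at the initial configuration $(\id,G_0,\spos)$ yields $\alpha_K^*(\id,G_0,\spos)=\true\iff G_0\satisfies_F\id$, identifying $\underline{L}_\id(F)$ with $\{G\mid G\satisfies_F\id\}$ and, with the totality from the first step, completing the lemma. I expect the main obstacle to be bookkeeping rather than conceptual: getting the interaction between $\sign$ and the conjunctive/disjunctive dichotomy exactly right across all formula shapes and both signs, and ensuring the well-founded measure genuinely covers the variable-expansion steps---the only steps that neither shrink the graph nor reduce the syntactic formula size---which is exactly where acyclicity of $D_F$ is indispensable.
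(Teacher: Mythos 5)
Your proposal is correct and follows essentially the same route as the paper's proof: first establish that the FCG $K$ is acyclic (the paper argues by contradiction, extracting a cycle in the dependency graph $D_F$ from any cycle in $K$, whereas you exhibit a lexicographic well-founded measure whose formula component relies on acyclicity of $D_F$), then prove by induction over $K$ the sign-aware correspondence between $\alpha_K^*$ and $\satisfies_F$, and finally instantiate it at the initial configuration $(\id,G_0,\spos)$. The only differences are cosmetic---your induction runs on the measure rather than on longest-path length, and your side remark that the graph size fails to shrink only for permutation-graph cut-offs is slightly inaccurate (edgeless frontal subgraphs whose nodes all lie in the rear interface also leave the size unchanged) but harmless, since in quantifier steps your second component decreases anyway.
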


The definition of the satisfaction relation in this lemma is a straightforward extension of 
\cite[Definition~7]{dhm-icgt25b}, which defines satisfaction of 
graph formulas without variables. \lemmaref{l:inductive-semantics} simply adds case~\ref{i:sat-id} and
defines the satisfaction of variables by the satisfaction of the assigned formula.

\begin{proof}
  Let $F\colon \Ident\to\RecFormula_{\Voc,\Ident}$ be an acyclic formula system,
  $\id\in\Ident$ a variable, and $G_0\in\GG^{\rank(x)}_\Voc$ a graph. Define
  $K=\fcg_{F,\id}(G_0)$ as the FCG of $F$ at $\id$ for $G_0$.

  We first show that $K$ is acyclic. Suppose, for contradiction, that $K$
  contains a cycle. By the definition of the step relation $\evalstep_F$ in
  \defref{d:rec-configs}, such a cycle must include formula configurations
  $(y,G,\sign)$ and $(F(y),G,\sign) \in \nd K$ such that
  \[
    (y,G,\sign) \evalstep_F (F(y),G,\sign) \evalstep_F^* (y,G,\sign)
  \]
  for some $y\in\Ident$. This implies that the dependency graph $D_F$ of $F$
  contains a cycle involving $y$, contradicting the assumption that $F$ is
  acyclic. This proves that $K$ is acyclic.

  For each formula configuration $c=(\fo,G,\sign)\in\nd K$, we prove by
  induction on the length of the longest path starting at $c$ that
  \begin{equation}
    \alpha_K^*(c) =
    \begin{cases}
      \true  & \text{if $\sign=\spos $ and $G\satisfies_F\fo$, or $\sign=\sneg $ and $G\not\satisfies_F\fo$,} \\
      \false & \text{otherwise.}
    \end{cases}
    \label{eq:compute-alpha}
  \end{equation}

  For the base case, consider a configuration $c$ with no outgoing edges. Then
  $\fo$ must be of the form $\true$, $\false$, $\exists(\ex,\fo')$, or
  $\forall(\ex,\fo')$. We analyze the case $\fo = \forall(\ex,\fo')$; the others
  follow similarly.

  Since $\fo = \forall(\ex,\fo')$, the configuration $c$ is universal if
  $\sign=\spos $ and existential if $\sign=\sneg $, by \defref{d:rec-configs}. The
  fact that $c$ has no outgoing edges means that $G$ has no frontal subgraph in
  $L(\ex)$, which implies $G\satisfies_F\fo$ by case~\ref{i:sat-forall}.
  Furthermore, by \defref{dfn:evolution} the absence of edges leaving $c$ means that $\alpha_K^*(c) =
  \true$ if $\sign=\spos $ and $\alpha_K^*(c) = \false$ if $\sign=\sneg $. Hence,
  equation \eqref{eq:compute-alpha} holds for $c$ in both cases.

  For the inductive step, assume that $c$ has at least one successor
  configuration $c'$ with $c \evalstep_F c'$, and that \eqref{eq:compute-alpha}
  holds for all such $c'$. By \defref{d:rec-configs}, $\fo$ must be one of
  $\fo\in\Ident$, $\exists(\ex,\fo')$, $\forall(\ex,\fo')$, $\neg\fo'$, $\fo'
    \wedge \fo''$, and $\fo' \vee \fo''$. In each case, either $\sign=\spos $ or
  $\sign=\sneg $, and either $G\satisfies_F\fo$ or $G\not\satisfies_F\fo$. We analyze
  two representative cases: $\fo=\forall(\ex,\fo')$ and $\fo=\neg\fo'$, both
  with $\sign=\spos $ and $G\satisfies_F\fo$; the remaining cases follow similarly.

  \begin{itemize}
    \item Case $\fo = \neg\fo'$, $\sign = \spos $, and $G \satisfies_F \fo$.

          Since $\fo = \neg\fo'$, the only successor configuration is $c' =
            (\fo',G,\sneg)$. Since $G\satisfies_F\fo$, we have
          $G\not\satisfies_F\fo'$, so the induction hypothesis gives
          $\alpha_K^*(c') = \true$. By \defref{dfn:evolution}, this implies
          $\alpha_K^*(c) = \true$, as required by equation~\eqref{eq:compute-alpha}.

    \item Case $\fo = \forall(\ex,\fo')$, $\sign = \spos $, and $G \satisfies_F \fo$.

          By assumption, $G\satisfies_F\fo$. This
          can only be the case if $G \off P \satisfies_F \fo'$ holds for all
          frontal subgraphs $P \in L(\ex)$. By \defref{d:rec-configs}, each
          successor configuration of~$c$ is of the form $c' = (\fo', G \off P,
          \spos)$. By the induction hypothesis, $\alpha_K^*(c') = \true$ for all such
          $c'$, and by \defref{dfn:evolution} this implies $\alpha_K^*(c) =
          \true$, as required by equation~\eqref{eq:compute-alpha}, because $c$ is
          universal.
  \end{itemize}

  Since equation~\eqref{eq:compute-alpha} holds for all configurations in $K$, it
  particularly holds for the initial configuration $(\id, G_0, \spos)$ of $K$.
  Consequently,
  \[
    \alpha_K^*(\id,G_0,\spos) =
    \begin{cases}
      \true  & \text{if $G_0\satisfies_F\id$,}     \\
      \false & \text{if $G_0\not\satisfies_F\id$.}
    \end{cases}
  \]
  Thus, $G_0 \satisfies_F\id$ if and only if $G_0 \in L_\id(F)$, completing
  the proof.
\end{proof}

Note that, as one would expect, by \defref{dfn:f-semantics} the semantics of 
disjunction and conjunction of formulas is both commutative and associative.
This follows directly from the fact that FCGs contain no information about the order
of arguments of disjunctions and conjunctions, i.e., interchanging them results in
the same FCG. Thus, given a (finite) set $\mathit{FO}=\{\fo_1,\dots,\fo_n\}$ of
formulas, we can express their disjunction and conjunction using the standard notations
\[
  \bigvee_{\fo\in\mathit{FO}} \fo = \fo_1 \vee \fo_2 \vee \dots \vee \fo_n,
  \quad \text{and} \quad
  \bigwedge_{\fo\in\mathit{FO}} \fo = \fo_1 \wedge \fo_2 \wedge \dots \wedge \fo_n
\]
as shorthands for sequences of binary disjunctions and conjunctions, respectively.

In the next section we will show that formula systems have the same expressive
power as alternating graph automata. To facilitate that proof, but also because it
is of independent interest, we now develop a normal form of formula systems which
we call \emph{shallow normal form}. The
result states that every formula system~$F$ can be
brought into a structurally simpler normal form $F'$ over a larger set of variables
$\Ident'$ such that, for all $\id\in\Ident'$, every proper sub-formula of $F(\id)$
is a variable. 

\begin{theorem}[Shallow Normal Form]\label{th:nf}%
  For each formula system $F\colon \Ident\to\RecFormula_{\Voc,\Ident}$ there is a 
  formula system $F'\colon\Ident'\to\RecFormula_{\Voc,\Ident'}$ with $\Ident'\supseteq\Ident$
  such that 
  \begin{enumerate}
  \item $\underline L_\id(F')=\underline L_\id(F)$ and 
        $\overline L_\id(F')=\overline L_\id(F)$ for all $\id\in\Ident$, and
  \item for every $\id\in\Ident'$, $F'(\id)$ has one of the following forms, where
        $G\in\GG_\Voc$ and $\id',\id''\in\Ident'$:
        \[
          \true,\ \false,\ \id'\vee\id'',\  \id'\wedge\id'',\ \exists(G,\id') \text{, and }\forall(G,\id').
        \]
  \end{enumerate}
\end{theorem}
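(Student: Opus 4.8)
The plan is to construct $F'$ by repeatedly introducing fresh variables---for subformulas, for signed copies, and for the unfolding of graph expressions---while keeping $\Ident\subseteq\Ident'$, and to verify preservation of both languages directly on the formula configuration graphs. The workhorse observation is that in any evaluation graph a node with a \emph{single} outgoing edge receives, in the least fixed point of \factref{fa:fp}, exactly the truth value of its unique successor, irrespective of whether it is universal or existential: both the universal and the existential rules of \defref{dfn:evolution} then reduce to copying the successor's value. Since every variable configuration $(\id,G,\sign)$ and every negation configuration $(\neg\fo,G,\sign)$ has a single successor, such \emph{transparent} nodes may be inserted or contracted freely, and binary $\wedge/\vee$ may be re-associated into fresh variables, without altering any truth value. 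I would carry out the construction in two phases: (I) eliminate negation and flatten every formula so that all proper subformulas are variables; (II) replace the graph expressions inside quantifiers by single graphs.

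For Phase~I, I would add for each $\id\in\Ident$ a dual variable $\id^-$ (writing $\id^+=\id$) and define a sign-carrying negation-normal-form translation $\mathrm{nnf}_s$ that pushes negations to the leaves via De Morgan's laws and the $\exists/\forall$ duality, flipping the carried sign $s$, replacing a variable leaf ${\id'}$ reached under sign $s'$ by ${\id'}^{s'}$, and allocating a fresh variable for every compound proper subformula. Setting $F_1(\id^{s})=\mathrm{nnf}_{s}(F(\id))$ (and analogously for the fresh subformula variables) yields a negation-free system in which every right-hand side is one of $\true,\false,{\id'}\vee{\id''},{\id'}\wedge{\id''},\exists(\ex,{\id'}),\forall(\ex,{\id'})$ with $\ex$ still a general graph expression. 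Correctness follows by induction on $\fo$: the value of $(\fo,G,s)$ in the FCG of $F$ equals the value of $(\mathrm{nnf}_{s}(\fo),G,\spos)$ in the FCG of $F_1$. The sign flip on $\neg$ mirrors the step $(\neg\fo,G,s)\evalstep_F(\fo,G,\overline s)$ already present in \defref{d:rec-configs}, and the De Morgan dualities keep the universal/existential classification aligned, so in particular the values at $(\id,G,\spos)$ coincide and both $\underline L_\id$ and $\overline L_\id$ are preserved for every $\id\in\Ident$.

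For Phase~II, I would proceed by structural induction on the expression $\ex$ occurring in each remaining $\Quant(\ex,{\id'})$, using \factref{f:off-gcat} together with the (routine) fact that the frontal subgraphs of $G$ lying in $L(\ex_1)\cdot L(\ex_2)$ are exactly the graphs $P_1\cdot P_2$ with $P_1$ a frontal subgraph of $G$ in $L(\ex_1)$ and $P_2$ a frontal subgraph of $G\off P_1$ in $L(\ex_2)$. This gives $\exists(\emptyset,{\id'})\equiv\false$ and $\forall(\emptyset,{\id'})\equiv\true$; a single graph is already admissible; $\exists(\ex_1\oplus\ex_2,{\id'})\equiv\exists(\ex_1,{\id'})\vee\exists(\ex_2,{\id'})$ (dually $\wedge$ for $\forall$); $\exists(\ex_1\gcat\ex_2,{\id'})\equiv\exists(\ex_1,\exists(\ex_2,{\id'}))$ (dually for $\forall$); and, for the star, the unfolding $\exists(\ex_0^\oast,{\id'})\equiv{\id'}\vee\exists(\ex_0,\exists(\ex_0^\oast,{\id'}))$ realised by a fresh recursive variable. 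Each rewrite is validated by checking that the entry configuration of the rewritten sub-system receives the same truth value, again via transparency and the frontal-subgraph bookkeeping.

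The delicate step---and the one I expect to be the real obstacle---is the Kleene star. In $F$ the node $(\exists(\ex_0^\oast,{\id'}),G,\sign)$ has as direct successors all of the finitely many configurations $({\id'},G\off P,\sign)$ with $P\in L(\ex_0)^{\ast}$ a frontal subgraph of the finite input, so its value is a definite, cycle-free disjunction. The naive recursive unfolding above, however, introduces a genuine cycle in the FCG as soon as $\ex_0$ contributes a \emph{non-reducing} cut, i.e.\ a permutation graph: cutting off a permutation only reorders the front interface and, after finitely many steps, returns to the original configuration, creating a disjunctive cycle that the least fixed point leaves \emph{undefined}; this would strictly enlarge $\overline L_\id$ and break exact preservation. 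The remedy is to separate the permutation behaviour from the reducing behaviour before unfolding. The permutation graphs occurring in $L(\ex_0)$ all have type $(i,i)$ and, under concatenation, generate a finite subgroup $\Pi$ of the symmetric group on the $i$ front nodes; writing $R$ for the remaining, necessarily reducing, graphs of $L(\ex_0)$, one has $L(\ex_0)^{\ast}=\Pi\cdot(R\cdot\Pi)^{\ast}$. I would expand the finite factor $\Pi$ into an explicit finite disjunction (resp.\ conjunction) of single-permutation-graph quantifiers and leave only the body $R\cdot\Pi$ to the recursion; since every cut by a graph in $R\cdot\Pi$ strictly decreases $\card{\nd G}+\card{\ed G}$, this recursion terminates on every finite input, so the resulting FCG is acyclic precisely where the original one is---exactly the condition, reminiscent of the absence of permutation cycles in \defref{d:afga}, that makes the lower and the upper language agree. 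Concretely this amounts to reading off a finite graph automaton for $L(\ex_0)^{\ast}$ whose only cycles pass through a reducing transition. Composing Phases~I and~II then gives the desired $F'$.
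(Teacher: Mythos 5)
Your proposal is correct and takes essentially the same route as the paper: a first phase eliminating negation via dual (signed) variables, followed by an iterative flattening with fresh variables whose cases for $\oplus$, $\gcat$, and $\oast$ coincide with the paper's, including the key remedy for the Kleene star---separating the permutation graphs of $L(\ex_0)$ (a finite group $\Pi$) from the strictly size-reducing remainder $R$ via $L(\ex_0)^\ast=\Pi\cdot(R\cdot\Pi)^\ast$, so that every recursive unfolding passes through a reducing cut and the relevant part of the FCG stays acyclic, which is precisely the content and use of the paper's \lemmaref{lem:non-permuting}. The only step you leave implicit is that $R$, the non-permutation part of $L(\ex_0)$, is itself denoted by a graph expression (needed so that the quantification over $R\gcat\Pi$ is a legal formula and the induction can continue); this is exactly what \lemmaref{lem:non-permuting} establishes by structural induction on expressions, so your argument is complete once that auxiliary fact is spelled out.
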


The proof of the existence of the shallow normal form makes use of an auxiliary lemma.

\begin{lemma}\label{lem:non-permuting}
  Call a graph expression \emph{non-permuting} if it has no sub-expression of the form
  $\ex^\oast$ such that $L(\ex)$ contains a permutation graph. Then every graph expression has
  an equivalent non-permuting graph expression.
\end{lemma}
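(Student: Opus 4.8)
The plan is to prove, by structural induction on $\ex$, a statement stronger than the lemma: for every graph expression $\ex$ there exist \emph{non-permuting} expressions $(\ex)_P$ and $(\ex)_N$ such that $L((\ex)_P)$ is the set of permutation graphs in $L(\ex)$ and $L((\ex)_N)$ is the set of the remaining graphs in $L(\ex)$. The lemma then follows at once, since $(\ex)_P\oplus(\ex)_N$ is non-permuting and denotes $L((\ex)_P)\cup L((\ex)_N)=L(\ex)$. Two elementary facts drive the argument. First, a concatenation $G_1\cdot G_2$ is a permutation graph if and only if both $G_1$ and $G_2$ are; by iteration, a product $G_1\cdots G_m$ is a permutation graph iff every factor is. The nontrivial direction uses \defref{d:graphops}: the product has no edges only if both factors are edgeless, and every node of each factor is forced to lie in both interfaces of the product. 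Second, for a fixed type $(i,i)$ there are only $i!$ permutation graphs up to isomorphism, so any language consisting solely of permutation graphs of type $(i,i)$ is finite and can be written as a finite union of single graphs, an expression that is trivially non-permuting.

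The cases $\ex=\emptyset$ and $\ex=G$ for a single graph $G$ are immediate. The cases $\ex=\ex_1\oplus\ex_2$ and $\ex=\ex_1\gcat\ex_2$ are routine bookkeeping on top of the first fact: for a union the two components are $(\ex_1)_P\oplus(\ex_2)_P$ and $(\ex_1)_N\oplus(\ex_2)_N$; for a concatenation the first fact gives permutation component $(\ex_1)_P\gcat(\ex_2)_P$, while the non-permutation part of $L(\ex_1)\cdot L(\ex_2)$ equals $L((\ex_1)_N)\cdot L(\ex_2)\cup L((\ex_1)_P)\cdot L((\ex_2)_N)$ and is therefore denoted by $\big((\ex_1)_N\gcat\ex_2^{\circ}\big)\oplus\big((\ex_1)_P\gcat(\ex_2)_N\big)$, where $\ex_2^{\circ}=(\ex_2)_P\oplus(\ex_2)_N$ is the non-permuting equivalent of $\ex_2$. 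None of these constructions introduces a new star, so every piece stays non-permuting by the induction hypothesis.

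The crux is the star case $\ex^\oast$, necessarily of type $(i,i)$. Write $L(\ex)=P\cup N$ with $P=L((\ex)_P)$ and $N=L((\ex)_N)$, and use the standard Kleene identity $L(\ex)^\ast=P^\ast\cdot(N\cdot P^\ast)^\ast$. By the first fact, a product of graphs from $L(\ex)$ is a permutation graph iff all its factors come from $P$; hence the permutation graphs in $L(\ex^\oast)$ are exactly $P^\ast$, which by the second fact is a finite set of permutation graphs and is therefore denoted by some star-free, non-permuting expression $\ex_{P^\ast}$, giving $(\ex^\oast)_P=\ex_{P^\ast}$. For the non-permutation part, observe that, again by the first fact, every graph in $N\cdot P^\ast$ is a non-permutation graph, so the language of $(\ex)_N\gcat\ex_{P^\ast}$ contains no permutation graph and the starred expression $\big((\ex)_N\gcat\ex_{P^\ast}\big)^\oast$ is non-permuting. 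Since $L(\ex^\oast)=P^\ast\,\sqcup\,P^\ast\cdot(N\cdot P^\ast)^{+}$ with the second summand consisting entirely of non-permutation graphs, one may set $(\ex^\oast)_N=\ex_{P^\ast}\gcat\big((\ex)_N\gcat\ex_{P^\ast}\big)\gcat\big((\ex)_N\gcat\ex_{P^\ast}\big)^\oast$. Every star occurring in $\ex_{P^\ast}$, $(\ex^\oast)_P$, and $(\ex^\oast)_N$ has a base whose language contains no permutation graph, so all constructed expressions are non-permuting, completing the induction.

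The single real obstacle is this star case, and the decisive idea is the strengthening of the induction hypothesis so as to isolate the permutation part $P$. Once one knows that $P^\ast$ is a \emph{finite} set of permutation graphs, hence expressible without any star, and that $N\cdot P^\ast$ contains no permutation graph at all, the offending star can be replaced, via $L(\ex)^\ast=P^\ast\cdot(N\cdot P^\ast)^\ast$, by a star whose base is purely non-permuting. The remaining work---verifying from \defref{d:graphops} and the definition of permutation graph that a product is a permutation graph exactly when all its factors are---is elementary but must be carried out carefully, since it is precisely what guarantees that no permutation graph slips into the base of the newly introduced star.
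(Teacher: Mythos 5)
Your proof is correct and follows essentially the same route as the paper's: a structural induction that isolates the (finitely many, hence star-free expressible) permutation graphs from the rest of the language, combined with the fact that a concatenation is a permutation graph iff all factors are, and a Kleene-style regrouping of the star so that its base contains no permutation graphs (your $P^\ast\cdot(N\cdot P^\ast)^\ast$ is the mirror image of the paper's $\Pi\odot\ex'_0\odot(\Pi\odot\ex'_0)^\oast\odot\Pi$). Your strengthened induction hypothesis, carrying both $(\ex)_P$ and $(\ex)_N$, is only a bookkeeping variant of the paper's (and is in fact slightly more careful in the concatenation case, where the paper reuses the possibly non-normalized $\ex'$ and $\ex''$).
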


\begin{proof}
Denote by $P^{(i)}$ the set of all permutation graphs of type $(i,i)$.
We show the following statement by structural induction:
for every graph expression $\ex\in\Exp^{(i,j)}_\Voc$,
there is a non-permuting graph expression $\ex_0\in\Exp^{(i,j)}_\Voc$ such that $L(\ex_0)=L(\ex)\setminus
P^{(i)}$. This proves the lemma because the set $P$ of permutation graphs in $L(\ex)$ is finite,
say $P=\{\pi_1,\dots,\pi_k\}$, and thus $\ex_0\oplus\pi_1\oplus\cdots\oplus\pi_k$ is a
non-permuting graph expression equivalent to $\ex$.

The induction is straightforward: if $\ex=G\in\GG_\Voc^{(i,j)}$ then
$\ex_0=G$ unless~$G$ is a permutation graph, in which case $\ex_0=\emptyset$. If
$\ex=\ex'\oplus\ex''$ then $\ex_0=\ex'_0\oplus\ex''_0$, where $\ex'_0$ and $\ex''_0$
are obtained from $\ex'$ and $\ex''$ using the induction hypothesis.
If $\ex=\ex'\odot\ex''$ then $\ex_0=(\ex'_0\odot\ex'')\oplus(\ex'\odot\ex''_0)$.
Finally, if $\ex=(\ex')^\oast$ 
and $\{\pi_1,\dots,\pi_k\}$ is the set of permutation graphs in $L(ex)$
then $$\ex_0=\Pi\odot\ex'_0\odot(\Pi\odot\ex'_0)^\oast\odot\Pi$$
where $\Pi=\pi_1\oplus\cdots\oplus\pi_k$.

Correctness follows immediately from the relevant definitions, especially the fact
that the concatenation of graphs $G$ and $G'$ is a permutation graph if and only if both 
$G$ and $G'$ are.
\end{proof}

\begin{proof}[Proof of \thmref{th:nf}]
The construction of the shallow normal form $F'$ of $F$ proceeds in two steps. First we remove negations by
``pushing negations down'' through the formulas. Second, we repeatedly decompose formulas into smaller ones.

For the first step, let $\widebar\id$ be a fresh copy of each variable $\id\in\Ident$,
and extend $F$ by adding the equation $F(\widebar\id)=\neg F(\id)$ for every $\id\in\Ident$.
Clearly, this does not affect any of the languages accepted or rejected at the original
variables $\id$ because no $F(\id)$, $\id\in\Ident$, contains one of the new variables. Consequently, the language accepted (rejected) at $\widebar\id$ is the language rejected (accepted, respectively) at $\id$, for every $\id\in\Ident$.
In the following, we let $\widebar\Ident=\{\widebar\id\mid\id\in\Ident\}$ and
$\widebar{\widebar\id}=\id$ for all $\id\in\Ident$.

Now, as long as the formula system obtained in this way still contains negations, pick
any sub-formula of the form $\neg\fo$ and replace it by
\[
\left\{\begin{array}{ll@{}}
\fo' & \text{if $\fo=\neg\fo'$}\\
\neg\fo'\opname{\widebar\otimes}\neg\fo'' & \text{if $\fo=\fo'\otimes\fo''$ for some $\otimes\in\{{\vee},{\wedge}\}$, where $\widebar\vee=\wedge$ and $\widebar\wedge=\vee$}\\
\widebar\Quant(\ex,\neg\fo') & \text{if $\fo=\Quant(\ex,\fo')$ for some $\Quant\in\{{\exists},{\forall}\}$, where $\widebar\exists=\forall$ and $\widebar\forall=\exists$}\\
\widebar\id & \text{if $\fo=\id\in\Ident\cup\widebar\Ident$.}\\
\widebar\fo & \text{if  $\fo\in\{\true,\false\}$, where $\widebar\true=\false$ and $\widebar\false=\true$}\\
\end{array}\right.
\]
Obviously, this transformation procedure terminates after a finite number of steps and results in a formula system without occurrences of $\neg$.

Using the standard rules of predicate logic (in particular deMorgan's rules) and the relation between the languages accepted and rejected at $\id$ and $\widebar\id$, it can be verified in a straightforward manner that each of the transformation steps preserves those languages, and thus by induction the languages accepted and rejected by the resulting formula system at each $\id\in\Ident$ are preserved as well.
In the following, second step of the transformation to normal form, we can thus assume without loss of generality that the original system $F$ does not contain negations.

Next, we transform the formula system iteratively until it is in shallow normal form. In 
doing so, we may keep equations of the form $F'(\id)=\id'$, because such an equation can
obviously be replaced by $F'(\id)=\id'\wedge\id'$ without affecting the most undefined
fixed point $\alpha^*$.

If $F$ is is not yet in shallow normal form, pick an $\id\in\Ident$
such that $F(\id)$ is not as required. We decompose $F(\id)$ by introducing one or two fresh
variables and replacing the equation for $F(\id)$ by two or more equations. For all
$\id'\in\Ident\setminus\{\id\}$ we let $F'(\id')=F(\id')$. We distinguish the relevant cases,
where $\id_1$ and $\id_2$ are fresh variables of appropriate types.

If $F(\id)=\fo_1\otimes\fo_2$ with ${\otimes}\in\{\vee,\wedge\}$ and
$\{\fo_1,\fo_2\}\nsubseteq\Ident$, we let
$F'(\id)=\id_1\otimes\id_2$ and $F'(\id_i)=\fo_i$ for $i\in[2]$.

If $F(\id)=\Quant(G,\fo)$ with $\Quant\in\{\exists,\forall\}$, $G\in\GG_\Voc$, and $\fo\notin\Ident$, let 
$F'(\id)=\Quant(G,\id_1)$ and $F'(\id_1)=\fo$.

Finally, let $F(\id)=\Quant(\ex,\fo)$ where $\Quant\in\{\exists,\forall\}$ and $\ex\notin\GG_\Voc$. By
\lemmaref{lem:non-permuting}, we may assume
without loss of generality that $\ex$ is non-permuting. This case
has a number of sub-cases depending on the structure of $\ex$, as follows.

If $\ex=\ex_1\oplus\ex_2$, we define
\[
F'(\id)=\left\{\begin{array}{ll@{}}
\id_1\vee\id_2 & \text{if $\Quant={\exists}$}\\
\id_1\wedge\id_2 & \text{if $\Quant={\forall}$}\\
\end{array}\right.
\text{\quad and\quad $F'(\id_i)=\Quant(\ex_i,\fo)$ for $i\in[2]$.}
\]

If $\ex=\ex_1\odot\ex_2$, we define $F'(\id)=\Quant(\ex_1,\id_1)$ and $F'(\id_1)=\Quant(\ex_2,\fo)$, which
is semantically equivalent by \factref{f:off-gcat}.

If $\ex=\ex_1^\oast$, we define
\[
F'(\id)=\left\{\begin{array}{ll}
\id_1\vee\id_2 & \text{if $\Quant={\exists}$}\\
\id_1\wedge\id_2 & \text{if $\Quant={\forall}$}\\
\end{array}\right.
\text{\quad $F'(\id_1)=\fo$\quad and\quad $F'(\id_2)=\Quant(\ex_1,\id)$}
\]
again making use of \factref{f:off-gcat}.

This construction can only be repeated a finite number of times (bounded from above by the
sum of the sizes of the formulas $F(\id)$) because in each step the sum of the sizes of those
formulas which violate normal form is reduced by one.

For the correctness proof of the construction, we need to show that every step of the
transformation preserves the languages at each of the original variables. Rather than showing
this in detail for each case, we look at one example case because
the rest is similar. We consider the least obvious one of the cases, namely $F(\id)=\Quant(\ex,\fo)$ 
where $\ex=\ex_1^\oast$. Assume that ${\Quant}={\forall}$ (the case ${\Quant}={\exists}$ is dual and thus
follows by the same arguments) and consider FCGs~$\fcg_{F,\id_0}(G_0)$ and $\fcg_{F',\id_0}(G_0)$,
respectively, for some input graph~$G_0$.

\newcommand{\frontal}{\mathit{FG}}%
\newcommand{\nxt}{\mathit{NEXT}}%
In the rest of the proof, we denote the set of all frontal subgraphs of a graph~$G$ by
$\frontal(G)$.

Now, consider a configuration $\kappa=(\id,G,\sign)$ in $\fcg_{F,\id_0}(G_0)$. By the definition of
$\fcg_{F,\id_0}(G_0)$, $\kappa$ has the unique successor $(\forall(\ex,\fo),G,\sign)$. 
In turn, the successors of that configuration are all $(\fo,G',\sign)$ such that $G'\in\nxt$, where
\[
\nxt=\{G\off\Gamma\mid\Gamma\in\frontal(G)\cap L(\ex)\}.
\]
Since $\ex=\ex_1^\oast$ and $\nxt$ is finite, there is some $k\in\Nat$ such that $\nxt=
\nxt_0\cup\cdots\cup\nxt_k$ where $\nxt_0=\{G\}$ and $\nxt_{i+1}=\{G'\off\Gamma\mid G'\in\nxt_i,\ 
\Gamma\in\frontal(G')\cap L(\ex_1)\}$ for all
$i\in\Nat$. Note that, since $L(\ex_1)$
does not contain permutation graphs, each $G'\off\Gamma$ in this construction is strictly smaller
than~$G'$.

$\fcg_{F',\id_0}(G_0)$ differs from $\fcg_{F,\id_0}(G_0)$ in that the edges from~$\kappa$ to $(\fo,G',\sign)$
($G'\in\nxt$) are replaced by a more complex structure. Since
\[
F'(\id)=\left\{\begin{array}{ll}
\id_1\vee\id_2 & \text{if $\Quant={\exists}$}\\
\id_1\wedge\id_2 & \text{if $\Quant={\forall}$}\\
\end{array}\right.
\text{\quad $F'(\id_1)=\fo$\quad and\quad $F'(\id_2)=\Quant(\ex_1,\id)$,}
\]
this structure is defined recursively, as follows: For every graph~$G'\in\nxt_i$ such that
$\fcg_{F',\id_0}(G_0)$ contains the configuration $\kappa'=(\id,G',\sign)$ (the initial case being
$\kappa'=\kappa$), there is an edge from~$\kappa'$ to $(\id_1\wedge\id_2,G',\sign)$ and there are edges from
the latter to $(\id_1,G',\sign)$ and $(\id_2,G',\sign)$. These have the unique successors
$(\fo,G',\sign)$ and $(\forall(\ex_1,\id),G',\sign)$, respectively. Of these, the former is
in~$\fcg_{F',\id_0}(G_0)$ (the recursive construction stops) whereas the latter has all $(\id,G'',\sign)$ as
successors such that $G''\in\nxt_{i+1}$. The fact that~$G''$ is strictly smaller than~$G'$ ensures
that the subgraph of $\fcg_{F',\id_0}(G_0)$ created in this way is a DAG whose leaves are precisely
the elements of $\nxt$, i.e., the successors of $(\fo,G',\sign)$ in $\fcg_{F,\id_0}(G_0)$.
It follows that $\alpha_{\fcg_{F',\id_0}(G_0)}^*(\kappa')=\alpha_{\fcg_{F,\id_0}(G_0)}^*(\kappa)$, as
claimed.
\end{proof}

\section{Expressive Power of Formula Systems}\label{s:power}

In this section, we investigate the expressive power of formula systems
and establish their equivalence to alternating graph automata. Specifically, we
show that every alternating graph automaton can be transformed into a formula
system that yields the same upper and lower languages. Conversely, we
construct an alternating graph automaton for any given formula system $F$ 
in shallow normal form and any
$\id\in\Ident$, such that the languages accepted and rejected by $F$ at $\id$
are precisely those accepted and rejected, respectively, by the automaton.

\begin{lemma}\label{l:RGF-for-auto}%
  Let $\aut A = (\Voc, Q, \Delta, q_0, Q_\forall)$ be an alternating graph
  automaton. Define the function $F\colon Q \to \RecFormula_{\Voc,Q}$ by
  \begin{equation}
    F(q) =
    \begin{cases}
      \displaystyle\bigwedge_{(q,\Gamma,q')\in\Delta} \!\!\!\!\!\!\forall(\Gamma, q') & \text{if } q \in Q_\forall, \\
      \displaystyle\bigvee_{(q,\Gamma,q')\in\Delta} \!\!\!\!\!\!\exists(\Gamma, q')   & \text{otherwise}
    \end{cases}
  \label{eq:RGF-from-auto}
  \end{equation}
  for each state $q \in Q$.
  Then $F$ is a formula system and the following equivalences hold:
  \[
    \underline{L}(\aut A) = \underline{L}_{q_0}(F) \qquad \text{and}
    \qquad \overline{L}(\aut A) = \overline{L}_{q_0}(F).
  \]
\end{lemma}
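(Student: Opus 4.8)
The plan is to fix an arbitrary input graph $G_0\in\GG_\Voc^{\rank(q_0)}$ and to compare the formula configuration graph $K=\fcg_{F,q_0}(G_0)$ with the configuration graph $K'=\cg_{\aut A}(G_0)$, the goal being the pointwise identity $\alpha_K^*(q_0,G_0,\spos)=\alpha_{K'}^*(q_0,G_0)$. That $F$ is a formula system is immediate: each transition $(q,\Gamma,q')\in\Delta$ contributes a subformula $\exists(\Gamma,q')$ or $\forall(\Gamma,q')$ of type $\rank(q)$, so $F(q)\in\RecFormula^{\rank(q)}_{\Voc,Q}$. For the language equalities I first observe that, since no $F(q)$ contains a negation, the sign can never flip along a step, so every configuration reachable in $K$ carries the sign $\spos$, which I suppress. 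Starting from a configuration $(q,G,\spos)$ the relation $\evalstep_F$ unfolds in a fixed pattern: one step reaches $(F(q),G,\spos)$; the big disjunction (if $q\in Q_\exists$) or conjunction (if $q\in Q_\forall$) is broken apart by the $\vee$- resp.\ $\wedge$-steps into the configurations $(\exists(\Gamma,q'),G,\spos)$ resp.\ $(\forall(\Gamma,q'),G,\spos)$; and each of these steps to exactly the configurations $(q',G\off P,\spos)$ for the frontal subgraphs $P\in L(\Gamma)$ of $G$. Since $L(\Gamma)=\{\Gamma\}$, these are precisely the configurations produced by the automaton steps $(q,G)\trans_\Delta(q',G\off P)$ via $(q,\Gamma,q')$. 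Hence $(q,G,\spos)\mapsto(q,G)$ is a bijection between the variable configurations of $K$ and the nodes of $K'$, and every automaton step corresponds to one such ``gadget'' in $K$.

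The second ingredient is a type analysis showing that each gadget faithfully computes a three-valued disjunction or conjunction. For $q\in Q_\exists$ all intermediate configurations — the head $(F(q),G,\spos)$, the binary $\vee$-nodes, and the $(\exists(\Gamma,q'),G,\spos)$-nodes — are existential, while the variable node $(q,G,\spos)$ has the single successor $(F(q),G,\spos)$ and merely copies its value; the case $q\in Q_\forall$ is dual, with universal nodes and conjunctions. Because the least fixed point $\alpha_K^*$ defines a node's value only when the evolution rules force it (true for an existential node once some successor is true, false only once all successors are false, and dually for universal nodes), a finite tree of existential $\vee$-nodes is assigned exactly the three-valued disjunction of its leaves, and dually for $\wedge$-nodes. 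A quantifier node $(\exists(\Gamma,q'),G,\spos)$ whose $\Gamma$ is not a frontal subgraph of $G$ has no successor and is therefore assigned $\false$, matching an empty disjunct; a non-applicable $\forall$-node is assigned $\true$. (When $q$ has no transition, $F(q)$ is the empty $\bigvee$ resp.\ $\bigwedge$, read as $\false$ resp.\ $\true$, exactly as sink states are treated for $\aut A$.) Thus $\alpha_K^*(q,G,\spos)$ equals the three-valued disjunction (if $q\in Q_\exists$) or conjunction (if $q\in Q_\forall$) of the values $\alpha_K^*(q',G\off P,\spos)$ ranging over the steps $(q,G)\trans_\Delta(q',G\off P)$.

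This is precisely the fixed-point equation of $\evo_{K'}$, which is where cyclicity — caused by permutation cycles — forces a fixed-point argument rather than the bottom-up induction of \lemmaref{l:inductive-semantics}. On the one hand, the identity just derived says that $\bar\alpha(q,G):=\alpha_K^*(q,G,\spos)$ satisfies the fixed-point equation of $\evo_{K'}$, so $\alpha_{K'}^*\sqsubseteq\bar\alpha$ by minimality (\factref{fa:fp}). On the other hand, I would lift $\alpha_{K'}^*$ to an assignment $\hat\alpha$ on all of $K$ by setting $\hat\alpha(q,G,\spos)=\alpha_{K'}^*(q,G)$ on variable configurations and each intermediate gadget node to the three-valued disjunction resp.\ conjunction of the leaf values beneath it. A direct check against \defref{dfn:evolution} shows that $\hat\alpha$ is a (not necessarily least) fixed point of $\evo_K$, since every assigned value agrees with the disjunction/conjunction of its successors' values. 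Minimality of $\alpha_K^*$ then gives $\alpha_K^*\sqsubseteq\hat\alpha$, i.e.\ $\bar\alpha\sqsubseteq\alpha_{K'}^*$. The two inclusions yield $\bar\alpha=\alpha_{K'}^*$, and in particular $\alpha_K^*(q_0,G_0,\spos)=\alpha_{K'}^*(q_0,G_0)$.

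Finally I would translate this equality into languages: $G_0$ is accepted by $F$ at $q_0$ iff $\alpha_K^*(q_0,G_0,\spos)=\true$ iff $\alpha_{K'}^*(q_0,G_0)=\true$ iff $G_0$ is accepted by $\aut A$, giving $\underline L_{q_0}(F)=\underline L(\aut A)$; replacing $\true$ by $\false$ and ``accepted'' by ``rejected'' yields $\overline L_{q_0}(F)=\overline L(\aut A)$. I expect the main obstacle to be making the gadget collapse rigorous for partial three-valued assignments: verifying that the inserted intermediate nodes never introduce spurious truth values at the least fixed point, and that the lifted assignment $\hat\alpha$ is genuinely a fixed point, since over cyclic evaluation graphs one cannot evaluate bottom-up but must reason about least fixed points directly and exploit the associativity and idempotence of the three-valued connectives together with the correct reading of empty disjunctions and conjunctions.
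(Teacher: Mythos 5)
Your proposal is correct, and its skeleton is the same as the paper's: fix $G_0$, check that $F$ is well-typed, and set up a node-for-node, step-for-step correspondence under which each automaton configuration $(q,G)$ appears in the FCG as $(q,G,\spos)$ and is expanded into a small tree of uniform polarity --- the head $(F(q),G,\spos)$, binary $\vee$/$\wedge$ nodes, and quantifier leaves $(\Quant(\Gamma,q'),G,\spos)$ --- whose leaf steps are exactly the automaton steps $(q,G)\trans_\Delta(q',G')$. Where you genuinely add something is at the final step. The paper, having exhibited this correspondence, essentially asserts that the two least fixed points agree on corresponding configurations; you instead prove it: the least fixed point of an evolution operator satisfies the three-valued Kleene equations, so restricting your $\alpha_K^*$ to variable configurations yields a fixed point of $\evo_{K'}$ on the automaton's configuration graph, lifting $\alpha_{K'}^*$ through the gadgets yields a fixed point of $\evo_K$, and minimality (\factref{fa:fp}) applied on both sides forces the two to coincide. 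This two-sided minimality argument is exactly what is needed to make the paper's one-line conclusion rigorous on cyclic configuration graphs, where the bottom-up induction of \lemmaref{l:inductive-semantics} is unavailable; you also settle the sink-state case (empty $\bigvee$/$\bigwedge$ read as $\false$/$\true$) explicitly, which the paper passes over. Two minor points to tidy up: (i) your claim that the least fixed point assigns a value only when forced should be justified, either from monotonicity of the iteration $(\evo^i(\alpha_\bot))_{i\in\Nat}$ or directly from minimality (un-defining an unforced value again yields a fixed point); (ii) in the universal gadget the variable configuration $(q,G,\spos)$ itself is still \emph{existential}, since variables are disjunctive formulas by \defref{d:eqsystem} --- this is harmless for precisely the reason you give, namely that it has a single successor and merely copies its value, but the phrase ``dual, with universal nodes'' should exclude it.
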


\begin{proof}
  Let $\aut A$ and $F$ be as described in
  the lemma, and consider a graph $G_0 \in \GG^{\rank(q_0)}_\Voc$. Define $K = \cg_{\aut
      A}(G_0)$ and $K' = \fcg_{F,q_0}(G_0)$ to be the configuration graph of $\aut A$
  and the formula configuration graph of $F$ at $q_0$, respectively, both
  constructed for the input graph $G_0$.
  Clearly, $F$ is a formula system since $F(q)\in\RecFormula^m_{\Voc,Q}$ for every $q\in Q^{(m)}$.

  Next, we show that every configuration $(q,G) \in \nd K$ also appears in $\nd K'$
  as $(q,G,\spos)$. This is proven by induction on the length of the
  shortest path from the initial configuration $(q_0,G_0)$ to $(q,G)$ in $K$.
  For the base case, the claim holds trivially since $(q_0,G_0,\spos)$ is
  the initial configuration of $K'$.
  For the inductive step, assume that $(q,G)$ and $(q',G')$ are configurations
  in $K$ such that $(q,G) \trans_\Delta (q',G')$, and assume, as the induction
  hypothesis, that $(q,G,\spos) \in \nd K'$. Further, let $q$ have exactly $n \in
    \Nat$ outgoing transitions $(q, \Gamma_1, q_1), \dots, (q, \Gamma_n, q_n) \in
    \Delta$. Consequently, there exists an index $i \in [n]$ and a frontal
  subgraph $P$ of $G$ such that $q' = q_i$, $\Gamma_i$ is isomorphic to~$P$, and
  $G' = G \off P$.
  We now consider the case where $q$ is universal; the existential case follows
  analogously. Since $F(q) = \bigwedge_{i\in[n]} \forall(\Gamma_i, q_i)$, the
  configuration $(q,G,\spos)$ in $K'$ has $(F(q),G,\spos)$ as its only successor. This
  configuration, in turn, has successor configurations $(\forall(\Gamma_j, q_j),
    G,\spos)$ in $K'$ (along with possibly some intermediate configurations) for all
  $j \in [n]$. An example for $n=2$ is shown in \figref{f:fcg-succ}.
  As a result, $(\forall(\Gamma_i, q_i), G,\spos) \evalstep_F (q', G',\spos)$, which
  implies that $(q', G',\spos) \in \nd K'$, completing the induction.
  \begin{figure}[tb]
    \centering
    \begin{tikzpicture}[x=20mm,y=-10mm]
      \node[config] (0) at (0,0) {$\CFG{q}{G}{\spos}$};
      \node[config] (0-1) at ($(0)+(0,1)$)
      {$\CFG{\forall\left(\Gamma_1,q_1\right) \wedge \forall\left(\Gamma_2,q_2\right)}{G}{\spos}$};
      \node[config] (0-1-1) at ($(0-1)+(-0.8,1)$) {$\CFG{\forall\left(\Gamma_1,q_1\right)}{G}{\spos}$}; 
      \node[config] (0-1-2) at ($(0-1)+(0.8,1)$) {$\CFG{\forall\left(\Gamma_2,q_2\right)}{G}{\spos}$}; 
            
      \path
      (0) edge[->] (0-1)
      (0-1) edge[->] (0-1-1) edge[->] (0-1-2)
      ;
    \end{tikzpicture}
    \caption{Formula configuration $(q,G,\spos)\in\nd K'$ and (some of) its successors in $K'$.}
    \label{f:fcg-succ}
  \end{figure}
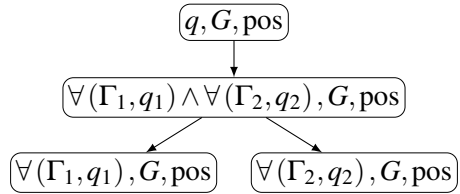

  By a very similar argument, we can show that every formula configuration
  $(q,G,\spos)\in\nd K'$ also appears in $\nd K$ as $(q,G)$.

  Next, we examine the most undefined fixed points $\alpha_K^*$ and
  $\alpha_{K'}^*$ for $K$ and $K'$, respectively. First, observe that for every
  formula configuration $(q,G,\spos) \in \nd K'$, we have
  $
    \alpha_{K'}^*(q,G,\spos) = \alpha_{K'}^*(F(q),G,\spos),
  $
  since $(F(q),G,\spos)$ is the only successor of $(q,G,\spos)$ in $K'$.\footnote{For
    simplicity, we also write $\alpha_K^*(c) = \alpha_{K'}^*(c')$ if both
    $\alpha_K^*(c)$ and $\alpha_{K'}^*(c')$ are undefined.}

  Now consider any formula configuration $(F(q),G,\spos) \in \nd K'$. We focus on
  the case where $q$ is universal, i.e., $F(q) = \bigwedge_{i\in[n]}
    \forall(\Gamma_i, q_i)$ for some $n \in \Nat$; again, the existential case is
  analogous. The configuration $(F(q),G,\spos)$ is the root of a subtree in $K'$
  whose leaves are the formula configurations $(\forall(\Gamma_i, q_i), G,\spos)$
  for each $i \in [n]$. \figref{f:fcg-succ} illustrates this structure for
  $n=2$. All nodes in this tree are universal.

  As shown earlier, every step $(q,G) \trans_\Delta (q',G')$ in $K$ corresponds
  to a step $(\forall(\Gamma_i, q_i), G,\spos) \evalstep_F \allowbreak(q', G',\spos)$
  in $K'$, and the reverse correspondence also holds. This establishes that the
  evolution of truth values in $K$ and $K'$ results in fixed points $\alpha_K^*$
  and $\alpha_{K'}^*$ such that
  $
    \alpha_K^*(q,G) = \alpha_{K'}^*(q,G,\spos)
  $
  for all configurations $(q,G) \in \nd K$, and in particular for the initial
  configurations, yielding
  \[
    \alpha_K^*(q_0,G_0) = \alpha_{K'}^*(q_0,G_0,\spos),
  \]
  which completes the proof.
\end{proof}

\begin{example}
  Consider the alternating graph automaton in \figref{f:tree-auto} with the set
  $Q=\{q_0,q_1,q_2,q_3\}$ of states. Its equivalent formula system  $F\colon Q
  \to \RecFormula_{\Voc,Q}$ is defined by:
  $$
    q_0 \mapsto \exists\left(\raiseb\Gloop, q_2\right) \vee 
           \exists\left(\raiseb\GedgeU, q_1\right) \quad
    q_1 \mapsto \exists\left(\raiseb\GoutEN, q_1\right) \vee 
           \exists\left(\raiseb\GedgeD, q_2\right) \quad
    q_2 \mapsto \forall\left(\raiseb\Gnode, q_3\right) \quad
    q_3 \mapsto \false
  $$
  Note that $F(q_2)$ and $F(q_3)$ can be combined to $F(q_2) =
  \forall\left(\raiseb\Gnode, \false\right)$, which is equivalent to $H(v)$ in
  \exref{x:FCG}. This shows that \exref{x:FCG} indeed specifies the language of
  all graphs with a Hamiltonian cycle.
  \qed
\end{example}

To prove the converse, namely that alternating graph automata are at least as
expressive as formula systems, we show that every formula system can be
transformed into an equivalent alternating graph automaton. Since every formula system can be
transformed into shallow normal form, it is sufficient to
consider only formula systems in this normal form.

\begin{lemma}\label{l:RGF-to-auto}%
  Let $F\colon \Ident\to\RecFormula_{\Voc,\Ident}$ be a formula system in shallow
  normal form, and let $\id_0\in\Ident$ be any variable.
  Define the alternating graph automaton  
  $\aut A = (\Voc, Q, \Delta, q_0, Q_\forall)$ with
  $Q = \Ident$,
	  $q_0 = \id_0$,
	  $Q_\forall = \{\id\in\Ident\mid F(\id)\text { is conjunctive}\}$,
  and $\Delta$ being the smallest set such that:
  \begin{itemize}
    \item $(\id,\Idg m,\id'),(\id,\Idg m,\id'')\in\Delta$ for each
          $\id\in\Ident^{(m)}$ such that $F(\id)$ is of the form $\id'\vee\id''$
          or $\id'\wedge\id''$.
    \item $(x,\Gamma,x')\in\Delta$ for each $\id\in\Ident$ such that $F(\id)$ is
          of the form $\forall(\Gamma,\id')$ or  $\exists(\Gamma,\id')$.
   \end{itemize}
   Then, the following equalities hold:
   $$
    \underline{L}(\aut A) = \underline{L}_{\id_0}(F) \quad
    \text{and} \quad \overline{L}(\aut A) = \overline{L}_{\id_0}(F).
   $$ 
\end{lemma}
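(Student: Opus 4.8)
The plan is to mirror the proof of \lemmaref{l:RGF-for-auto}: fix an input graph $G_0\in\GG_\Voc^{\rank(\id_0)}$ and compare the configuration graph $K=\cg_{\aut A}(G_0)$ with the formula configuration graph $K'=\fcg_{F,\id_0}(G_0)$, with the goal of proving $\alpha_K^*(q_0,G_0)=\alpha_{K'}^*(\id_0,G_0,\spos)$, which immediately yields both language equalities. The first observation is that, since $F$ is in shallow normal form, no formula contains a negation, so the sign-flipping rule of \defref{d:rec-configs} never applies and every configuration reachable in $K'$ has sign $\spos$. Thus each node of $K'$ is either a \emph{variable-node} $(\id,G,\spos)$ or a \emph{formula-node} $(F(\id),G,\spos)$. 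Because a bare variable $\id$ is disjunctive, a variable-node is existential and, by \defref{d:rec-configs}, has the single successor $(F(\id),G,\spos)$; it therefore acts as a pass-through, so that $\alpha_{K'}^*(\id,G,\spos)=\alpha_{K'}^*(F(\id),G,\spos)$. The automaton state $\id$ at $G$ will correspond to the formula-node $(F(\id),G,\spos)$.

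The structural match between the steps rests on two facts. First, $G\off\Idg m=G$ for every $G\in\GG_\Voc^m$ (cutting off the identity graph removes nothing), so each $\Idg m$-transition $(\id,\Idg m,\id')$ reproduces the graph-preserving $\vee$/$\wedge$ steps of the FCG; second, $L(\Gamma)=\{\Gamma\}$ for a single graph $\Gamma$, so each transition $(\id,\Gamma,\id')$ reproduces the $\exists$/$\forall$ steps, whose successors are exactly the $(\id',G\off P,\spos)$ for frontal subgraphs $P$ isomorphic to $\Gamma$. Using this, I would first prove by induction on the length of a shortest path from the initial configuration (exactly as in \lemmaref{l:RGF-for-auto}) that $(\id,G)\in\nd K$ if and only if $(\id,G,\spos)\in\nd{K'}$. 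Along the way one checks the type match of the two evaluation graphs: $(\id,G)$ is universal in $K$ iff $\id\in Q_\forall$ iff $F(\id)$ is conjunctive iff the formula-node $(F(\id),G,\spos)$ is universal in $K'$.

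The core of the argument is the equality $\alpha_K^*(\id,G)=\alpha_{K'}^*(F(\id),G,\spos)$ for all corresponding configurations. Since $F$ may be cyclic, $K$ and $K'$ may contain cycles, so I cannot argue by structural induction over a DAG as in \lemmaref{l:inductive-semantics}; instead I would invoke the minimality of the least fixed point from \factref{fa:fp}. Concretely, define an assignment $\beta$ on $\nd{K'}$ by $\beta(\id,G,\spos)=\beta(F(\id),G,\spos)=\alpha_K^*(\id,G)$, and verify that $\beta$ is a fixed point of $\evo_{K'}$: at a variable-node this is the pass-through identity, and at a formula-node the conjunction/disjunction rule of \defref{dfn:evolution} depends only on the successor values, which by the step correspondence are exactly the values $\alpha_K^*$ assigns to the $K$-successors of $(\id,G)$; since $\alpha_K^*$ is a fixed point of $\evo_K$, the same value (or the same ``undefined'' verdict) is reproduced. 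Symmetrically, define $\gamma$ on $\nd K$ by $\gamma(\id,G)=\alpha_{K'}^*(F(\id),G,\spos)$ and check that $\gamma$ is a fixed point of $\evo_K$. By minimality we get $\alpha_{K'}^*\sqsubseteq\beta$ and $\alpha_K^*\sqsubseteq\gamma$; reading off these two relations at corresponding nodes shows that $\alpha_K^*(\id,G)$ is defined iff $\alpha_{K'}^*(F(\id),G,\spos)$ is, and that the two agree whenever defined. Specializing to $(\id_0,G_0)$ and using the pass-through identity then gives $\alpha_K^*(q_0,G_0)=\alpha_{K'}^*(\id_0,G_0,\spos)$.

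The main obstacle is precisely this fixed-point step. Because the construction inserts the extra pass-through variable-nodes into $K'$ (so $K$ and $K'$ are not isomorphic) and because both graphs may be cyclic, the equality of truth values cannot be obtained by naive induction and must instead be secured by the mutual-fixed-point plus minimality argument above. Throughout, care is needed to respect the \emph{partial} (``most undefined'') character of $\alpha^*$, so that definedness — and not merely the value where defined — is transported in both directions; the verification that $\beta$ and $\gamma$ are fixed points at \emph{undefined} nodes, not only at defined ones, is the point that makes this work.
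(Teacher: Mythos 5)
Your proposal is correct, but it takes a genuinely different route from the paper's. The paper does not re-run the simulation argument at all: it feeds the automaton $\aut A$ constructed in the lemma back into \lemmaref{l:RGF-for-auto}, obtaining a formula system $F'$, and observes that $F'$ coincides with $F$ except that in the $\vee/\wedge$ cases the variables get wrapped as $\exists(\Idg m,\id')$ resp.\ $\forall(\Idg m,\id')$; it then shows these wrappers are semantic no-ops (a configuration $\left(\Quant(\Idg m,\fo),G,\sign\right)$ has the single successor $(\fo,G,\sign)$ in any FCG, so their $\alpha^*$-values agree), concluding that $F'$ and $F$ are semantically equal and inheriting both language equalities directly from \lemmaref{l:RGF-for-auto}. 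You instead compare $\cg_{\aut A}(G_0)$ and $\fcg_{F,\id_0}(G_0)$ node by node and transport truth values via the two induced assignments $\beta$ and $\gamma$ together with the minimality guaranteed by \factref{fa:fp}. What the paper's route buys is brevity and reuse of work already done; what yours buys is rigor: the corresponding step in the paper's proof of \lemmaref{l:RGF-for-auto} (``this establishes that the evolution of truth values \dots results in fixed points such that \dots'') is left informal, whereas your mutual-fixed-point argument pins down exactly how definedness, not just values, is transported in both directions --- which is precisely the delicate point when $F$ is cyclic and the configuration graphs contain cycles. Note that your pass-through observation for variable-nodes is the same single-successor observation the paper makes for its $\Quant(\Idg m,\cdot)$ wrappers, so that ingredient is common to both proofs. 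One small wrinkle you should patch: $\beta$ must be well defined, and if $F(\id_1)=F(\id_2)$ for distinct variables $\id_1\neq\id_2$, the single formula-node $(F(\id_1),G,\spos)$ receives a value from both $(\id_1,G)$ and $(\id_2,G)$; this is harmless, because those two configurations of $\aut A$ have literally the same successor sets and the same universal/existential classification, hence equal values at every stage of the evolution, but it deserves a sentence in the write-up.
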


\begin{proof}
   Let $F$, $\id_0$, and $\aut A$ be as described in the lemma. Now construct
   the formula system $F'\colon\Ident\to\RecFormula_{\Voc,\Ident}$ from~$\aut A$
   as described in \lemmaref{l:RGF-for-auto}. Note that both $F$ and $F'$ are
   defined for the same set $\Ident$ of variables. The following equality
   follows immediately from the construction of~$\aut A$ and the definition
   of~$F'$ by~\eqref{eq:RGF-from-auto}, for each $\id\in\Ident$:
   \begin{equation}
      F'(\id) = 
      \left\{
      \begin{array}{ll}
      \exists(\Idg m,\id')\vee\exists(\Idg m,\id'') & \text{if } F(\id)=\id'\vee\id''\\
      \forall(\Idg m,\id')\wedge\forall(\Idg m,\id'') & \text{if } F(\id)=\id'\wedge\id''\\
      F(x)&\text{otherwise}
      \end{array}
      \right.
      \qquad\text{ where }m=\rank(\id).
      \label{eq:eqivalent}
   \end{equation}
   Note that each graph formula $\fo\in\RecFormula_{\Voc,\Ident}^m$ is
   semantically equivalent to both $\exists(\Idg m,\fo)$ and $\forall(\Idg
   m,\fo)$. This is easy to see when considering the formula configurations
   \[\left(\exists(\Idg
   m,\fo),G,\sign\right)\qquad\text{and}\qquad\left(\forall(\Idg
   m,\fo),G,\sign\right)\] for any graph $G\in\GG^{(m)}_\Voc$ and
   $\sign\in\{\spos,\sneg\}$. For both, $(\fo,G,\sign)$ is their only successor
   configuration in any FCG $E$, and hence \[\alpha_E^*\left(\exists(\Idg
   m,\fo),G,\sign\right)=\alpha_E^*\left(\forall(\Idg
   m,\fo),G,\sign\right)=\alpha_E^*(\fo,G,\sign),\] showing the equivalence of
   $\fo$, $\exists(\Idg m,\fo)$ and $\forall(\Idg m,\fo)$. Consequently, the
   equality \eqref{eq:eqivalent} can be reduced to $F'=F$, which completes the
   proof.
\end{proof}

\lemmaref{l:RGF-for-auto} and \lemmaref{l:RGF-to-auto} establish a bidirectional
correspondence between alternating graph automata and formula systems. In
particular, they show that any alternating graph automaton can be transformed
into an equivalent formula system, and conversely, any formula system, if it is
in shallow normal form, can be translated into an equivalent
alternating graph automaton. Since every formula system can be transformed into
this normal form, this leads to the main result of this paper:

\begin{theorem}\label{thm:main}
  Formula systems and alternating graph automata are of equal expressive power.
\end{theorem}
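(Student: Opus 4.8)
The plan is to assemble the main theorem directly from the three results already established, since \thmref{thm:main} is essentially a packaging of \lemmaref{l:RGF-for-auto}, \thmref{th:nf}, and \lemmaref{l:RGF-to-auto}. First I would make precise what ``equal expressive power'' should mean in this asymmetric setting: an alternating graph automaton defines a single pair $(\underline L(\aut A),\overline L(\aut A))$ of languages, whereas a formula system only defines such a pair once a variable $\id$ is selected, yielding $(\underline L_\id(F),\overline L_\id(F))$. Thus the statement to establish is that the collection of language pairs realizable by automata coincides with the collection of pairs realizable by (formula system, variable) combinations.

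For the first inclusion (automata are captured by formula systems), I would simply invoke \lemmaref{l:RGF-for-auto}: given any automaton $\aut A=(\Voc,Q,\Delta,q_0,Q_\forall)$, the lemma produces a formula system $F\colon Q\to\RecFormula_{\Voc,Q}$ with $\underline L(\aut A)=\underline L_{q_0}(F)$ and $\overline L(\aut A)=\overline L_{q_0}(F)$. Choosing the variable $q_0$ then exhibits exactly the same pair of languages, so every automaton-realizable pair is formula-system-realizable.

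For the reverse inclusion (formula systems are captured by automata), I would proceed in two steps for an arbitrary formula system $F$ and variable $\id$. First apply \thmref{th:nf} to obtain a shallow-normal-form system $F'$ over an extended variable set $\Ident'\supseteq\Ident$ satisfying $\underline L_{\id}(F')=\underline L_{\id}(F)$ and $\overline L_{\id}(F')=\overline L_{\id}(F)$, noting that $\id\in\Ident$ is among the variables for which these equalities are guaranteed. Then feed $F'$ and $\id$ into the construction of \lemmaref{l:RGF-to-auto}, producing an automaton $\aut A$ with initial state $\id$ such that $\underline L(\aut A)=\underline L_{\id}(F')$ and $\overline L(\aut A)=\overline L_{\id}(F')$. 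Chaining these equalities gives $\underline L(\aut A)=\underline L_{\id}(F)$ and $\overline L(\aut A)=\overline L_{\id}(F)$, so every formula-system-realizable pair is automaton-realizable.

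The main obstacle is not a technical difficulty in this final step but rather ensuring that the normalization of \thmref{th:nf} is genuinely required and correctly invoked: \lemmaref{l:RGF-to-auto} applies \emph{only} to systems already in shallow normal form, so the whole reverse direction hinges on the fact, established in \thmref{th:nf}, that normalization preserves \emph{both} the upper and the lower language simultaneously at each original variable. Once that is in place, \thmref{thm:main} follows by composing the three cited results, and no further computation is needed.
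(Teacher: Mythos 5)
Your proposal is correct and follows essentially the same route as the paper: one direction by \lemmaref{l:RGF-for-auto}, the other by composing the shallow normal form of \thmref{th:nf} with \lemmaref{l:RGF-to-auto}. Your explicit framing of ``equal expressive power'' as coincidence of the realizable pairs $(\underline L,\overline L)$, with a designated variable on the formula-system side, is a welcome clarification that the paper leaves implicit.
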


\section{Summary and Related Work}\label{s:concl}

We have introduced systems of graph formulas which extend the graph formulas
of~\cite{dhm-icgt25b} by an element of recursion, using a mechanism related
to the systems of language equations introduced by Mezei and Wright~\cite{Mezei-Wright:67}. Our main result is that these
formula systems have the same expressive power as the alternating graph
automata proposed in~\cite{dhm-icgt25a}. By results proved in the latter
paper this implies that the graph languages that can be specified by
formula systems are in PSPACE and include some PSPACE-complete languages.
Since we know from~\cite{dhm-icgt25b} that graph formulas on their own
(without variables) can only specify languages in the polynomial hierarchy PH,
it follows that formula systems are more powerful than graph formulas unless
$\text{PSPACE}=\text{PH}$.

Our notion of formula systems is related to the extension of the
widely known nested graph conditions of Rensink, Habel, and Pennemann
\cite{Rensink:04a,Habel-Pennemann:05,Pennemann:09} by Flick, termed
``recursively nested''~\cite{Flick:16}, which can specify certain non-local graph conditions
such as being acyclic or a tree. However, they cannot specify
conditions whose definition, e.g., demands the existence of certain cycle-free or
disjoint paths.
Hamiltonicity, for example, cannot be specified by recursively nested
conditions, which is possible with our formula systems. In fact, Hamiltonicity
can even be specified by a single graph formula without any variable, as shown 
in~\cite{dhm-icgt25b}.

% paper33.tex %%%%%%%%%%%%%%%%%%%%
% \subsubsection*{Related Work in~\cite{dhm-icgt25b}}
% The nested graph conditions of Habel and Pennemann
% \cite{Habel-Pennemann:05,Pennemann:09} have been extended to
% express global properties (like the (non-) existence of paths) in
% several ways:
% \begin{enumerate*}[label=(\arabic*)]
% \item The programmed graph replacement systems of
%   Schürr~\cite{Schuerr:97} allow ``path expressions'' in application
%   conditions for graph transformation rules.
% %
% \item %
%   Habel and Radke~\cite{Habel-Radke:10,Radke:16} allow context-free
%   hyperedge replacement (HR, \cite{Habel:92,Drewes-Habel-Kreowski:97})
%   in conditions; this generalizes graph expressions, which
%   are equivalent to linear HR~\cite{Engelfriet-Vereijken:97}.%
% %
% \item Wulandari, Plump, and
%   Poskitt~\cite{Wulandari-Plump:21,Poskitt-Plump:23} propose
%   (textual) conditions expressed in monadic second-order logic for
%   graphs~\cite{Courcelle:90}.
% \item In the navigational logic of Navarro \emph{et al.}~\cite{Navarro-Orejas-Pino-Lambers:21},
%   an edge labeled with a regular expression~$R$ 
  
%   matches any edge sequence that is labeled with a string in the
%   language of $R$.
% \end{enumerate*}
% %
% All these concepts rely on theorem proving for the actual implementation of conditions, 
% whereas graph formulas are implemented by automata.

%% paper32.tex %%%%%%%%%%%%%%%%%%%%%%%%%%
% \subsubsection*{Related Work in~\cite{dhm-icgt25a}}
While there has been work on finite graph-processing automata 
\cite{Witt:81,Kaminski-Pinter:92,Bozapalidis-Kalampakas:08,Kalampakas:11,%
Blume-Bruggink-Friedrich-Koenig:13,Blume:14,Bruggink-Koenig:18},
we are only aware of two papers on \emph{alternating}
graph automata:
\begin{enumerate*}[label=(\arabic*)]
\item The automata of Bruggink \emph{et al.}
  \cite{Bruggink-Huelsbusch-Koenig:12} appear to be
  weaker than ours, as discussed
  in~\cite[Example~5]{dhm-icgt25a}.
\item The automata of Brandenburg and
  Skodinis~\cite{Brandenburg-Skodinis:05} allow to specify graph
  languages that can be defined by node replacement
  \cite{Engelfriet-Rozenberg:97}, more precisely languages of
  undirected node-labelled graphs defined by boundary NCE grammars. In
  a universal configuration $(q,G)$ of their automata, with ongoing
  transitions $(q,\Gamma_1,q_1) \dots (q,\Gamma_k,q_k)$, cutting the
  occurrences of $\Gamma_1$ to $\Gamma_k$ off $G$ must result in
  pairwise unconnected remainder graphs $R_1 \dots R_k$ so that the
  automaton can proceed with configurations $(q_1,R_1)$ to $(q_k,R_k)$
  in parallel. This seems rather obscure, since such steps require a
  global check of the remainder graph, which contradicts the common
  understanding of automata.
\end{enumerate*}

%
% alte %%%%%%%%%%%%%%%%%%%%%%%%%%%%%%%%%%%%
%\subsubsection*{Related Work (old)}
%  Except for the related work discussed in~\cite{dhm-icgt25a,dhm-icgt25b},
% a discussion we do not repeat in this paper,
% we are not aware of research on (formalisms equivalent to) alternating graph
% automata.

Monadic second-order (MSO) logic on graphs is an extremely
well-studied formalism for specifying graph languages;
see~\cite{Courcelle-Engelfriet:12} and the multitude of references therein.
While also being a logic formalism, our formula systems are fundamentally
different: MSO logic is an instance of predicate logic whereas the formulas
introduced in \cite{dhm-icgt25b} and extended in the current paper are based
on the idea of analyzing a graph by repeatedly matching and cutting off
(frontal) subgraphs. Investigating the relation between the two formalisms
could be an interesting avenue for future work, our current conjecture being
that they are incomparable with respect to their expressive power.

%\subsubsection*{Future Work}
%\begin{itemize}
%\item Extending certain nodes in graph to be identified when a formula
%  sytem is evaluated and an alternative automaton is executed
%  increases the expressive power; e.g., the existence of Eulerian
%  paths can be specified.
%\item The relation of formula systems (and of alternating automata to
%  Courcelle's notion of recognisability \cite{Courcelle:91} should be studied.
%\item Techniques for improving the efficiency of alternating automata
%  should be studied.
%\end{itemize}

\paragraph{Acknowledgments.} We thank the anonymous reviewers for their
useful comments, which helped to improve the paper. 

\bibliographystyle{eptcs}
\bibliography{References}

\end{document}